\newcolumntype{R}[1]{>{\raggedleft\let\newline\\\arraybackslash\hspace{0pt}}m{#1}}
\newcolumntype{L}[1]{>{\raggedright\let\newline\\\arraybackslash\hspace{0pt}}m{#1}}
\newcommand{\bm}{\mathbf{m}}
\newcommand{\norm}[1]{\left|\left| #1\right|\right|}
\DeclareFontFamily{OT1}{pzc}{}
\DeclareFontShape{OT1}{pzc}{m}{it}{<-> s * [1.10] pzcmi7t}{}
\DeclareMathAlphabet{\mathpzc}{OT1}{pzc}{m}{it}
\newtheorem{theorem}{Theorem}
\newtheorem{corollary}{Corollary}
\newtheorem{remark}{Remark}
\newtheorem{lemma}{Lemma}
\begin{document}

\predate{}%
\postdate{}%
\date{}
\title{\LARGE A flexible functional-circular regression model for analyzing temperature curves\\}\normalsize
\author{Andrea Meil\'an-Vila$^{ \small 1,*}$ \\
	Carlos III University of Madrid
	\and Rosa M. Crujeiras$^{\small 2}$ \\
	University of Santiago de Compostela\\
	\and %
	Mario Francisco-Fern\'andez$^{ \small 3}$\\
	University of  A Coru\~na}
\maketitle


%
\footnotetext[1]{Department of Statistics, Carlos III University of Madrid, Av. de la Universidad 30, Leganés, 28911, Spain, $^{*}$ameilan@est-econ.uc3m.es}
\footnotetext[2]{CITMAga, Galician Centre for Mathematical Research and Technology, University of Santiago de Compostela, San
	Xerome s/n,Santiago de Compostela, 15782, Spain.}
\footnotetext[3]{Department of Mathematics, University of  A Coru\~na. CITIC, Faculty of Computer Science, Campus de Elviña
	s/n, A Coruña, 15071, Spain.}


\begin{abstract} 
Changes on temperature patterns, on a local scale, are perceived by individuals as the most direct indicators of global warming and climate change. As a specific example, for an Atlantic climate location, {spring and fall seasons} should present a mild transition between {winter} and {summer, and summer and winter, respectively}. By observing daily temperature curves along time, being each curve attached to a certain calendar day, a regression model for these variables (temperature curve as covariate and calendar day as response) would be useful for modeling their relation for a certain period. In addition, temperature changes could be assessed by prediction and observation comparisons in the long run. Such a model is presented and studied in this work, considering a {nonparametric} Nadaraya-Watson-type estimator for functional covariate and circular response. The asymptotic bias and variance of this estimator, as {well} as its asymptotic distribution are derived. Its finite sample performance is evaluated in a simulation study and the proposal is applied to investigate a real-data set concerning temperature curves.
\end{abstract}
\begin{flushleft}
	\small\textbf{Keywords:} Circular data, Flexible regression, Functional data, Temperature curves.
\end{flushleft}

\section{Introduction}
\label{intro}

  {The {State of the Climate} report by \cite{Blunden2020} corresponding to 2019, includes a series of analysis of fully monitored variables on a global scale. One of {these} variables is surface temperature, revealing that July 2019 was Earth's hottest month on record. In Europe, 2019 was the second hottest year (following 2018), and 2014-2019 are Europe's warmest years on record. The report indicates that the warming of land and ocean surfaces is reflected across the Globe, with lakes and permafrost temperatures increasing, as an evidence of climate change.}

{However, as pointed out by \cite{Bloodhart2015}, it is difficult for individuals to determine if they have experienced the effects of climate change, given that their information refers to a reduced period of time and is usually restricted to a local scale. Nevertheless, understanding how people experience the changes on local weather patterns is important, given that personal experiences are known to affect climate change beliefs \citep{Goebbert2012} and risk perceptions. {Hence, they may condition citizen's support on prevention policies \citep[see][among others]{Howe2018, Goebbert2012, Taylor2014, Bloodhart2015}}. {As pointed out by \cite{Goebbert2012} on a global scale, personal perceptions on weather patterns are usually noticed through temperature changes and this indicator is usually taken as an evidence by individuals for making inferences about climate changes. However, these subjective insights should be confirmed with the construction of an appropriate statistical regression model that allows for assessing if relevant changes in temperature patterns could have happened in different periods of time.}}

{As a motivating example, consider daily temperature records in Santiago de Compostela (NW-Spain) for the period {2002}-2019. Temperature curves from February 15, 2002 until June 28, 2005 can be seen in Figure~\ref{figure:rd} (left), where the color scale {indicates} the day of the year when each curve was observed (note that the scale-palette preserves the periodicity of the data). In Figure~\ref{figure:rd} (right), a functional boxplot {\citep{Sun2011}} is presented, plotting the 50\% central region for the observed temperature curves. {The vertical segments are the whiskers of the boxplot}. {Modeling} the relation between the temperature curve (as a functional covariate) and the day of the year {basis} (as a circular response) for a certain period of time would allow to investigate, by comparing {the predicted day by such a model for a given temperature curve, in a di\-ffe\-rent period, with the actual day corresponding to that temperature curve,}
	if temperature curve patterns are stable along time or if, on the contrary, observed temperature curves are displaced.}

\begin{figure}[htb]
	\centering
	\includegraphics[width=01\textwidth]{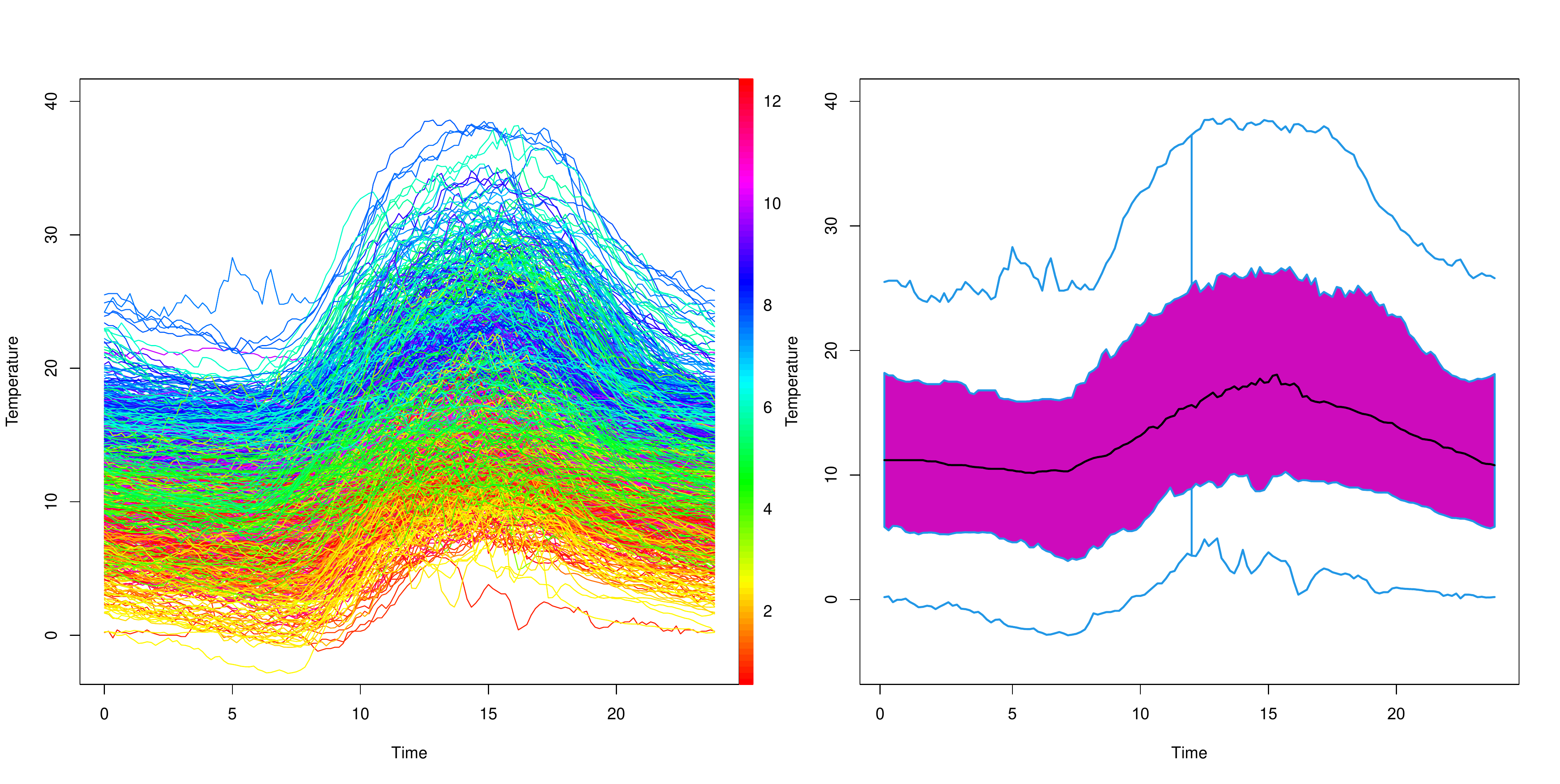}	
	\caption{Daily temperature curves
		in Santiago de Compostela (Spain) from February 15, 2002 to June 28, 2005 (left) and its corresponding functional boxplot (right).}
	\label{figure:rd}
\end{figure}


{The novel methodology presented in this paper exploits the usage of high frequency temperature records, {considering a (flexible) regression model with a functional covariate (the whole daily temperature curve) and a circular response (a day on a year basis)}. Circular data can be viewed as points on the unit circle, so fixing a direction and a sense of rotation, circular data can be expressed as angles. For a complete introduction on circular data, we refer to \citet{mardia2000directional} or \citet{ley2017modern}. This type of data appear in different applied fields, and  some examples include
	wind directions \citep{fisher1995statistical},  {wave} directions \citep{jona2012spatial,wang2015joint}, or animal orientations \citep{scapini2002multiple},  among others.   {Regression analysis considering models where the response and/or the covariables are circular variables has been addressed in different papers.}
	Parametric approaches were considered in \cite{fisher1992regression} and \cite{presnell1998projected} for regression models with circular response and Euclidean covariates. The authors assumed a parametric  distribution model for the circular response variable. An alternative parametric regression model was also analyzed in \cite{kim2017multivariate}, considering that a multivariate circular variable depends on several circular covariates. Using nonparametric methods, \cite{di2013non} introduced a  regression estimator  for models with circular response, a single real-valued covariate and independent errors and also for errors coming from mixing processes. A similar approach   was also applied for time series in \cite{di2012non}.  {Following similar ideas,  \cite{meilan2019c} proposed and studied local polynomial-type regression estimators considering { a model with} a circular response and several real-valued covariates for independent data}. This class of estimators was also analyzed in \cite{meilan2020a}  in the presence of spatial correlation.}

{ {A natural extension of the multiple regression model used in \cite{meilan2019c} would be to consider a functional covariate belonging to an infinite-dimensional space. Nowadays, with the advances in data collection methods, more and more data are being recorded during an interval time or at several discrete time points with a high frequency, producing functional data. For example, in many fields of applied sciences such as chemometrics \citep{abraham2003unsupervised}, biometrics \citep{gasser1998nonparametric} and medicine \citep{antoniadis2007estimation}, among others, it is quite common to deal with observations that are curves or functions. The branch of statistics that analyzes this type of data is called functional data analysis \citep[see, for instance,][]{ramsay2005functional,ferraty2006nonparametric,Kokoszka_17}.}
	The  literature on functional regression modeling  is really extensive \citep[see{, for example,}][{for complete reviews and recent advances in this field}]{greven2017general,morris2015functional,Aneiros_19,Aneiros_22,Goia_16}{, including parametric (in particular, linear) and nonparametric models}. Assuming parametric conditions, earlier advances on functional regression were introduced by \cite{ramsay2005functional}, while  a recent overview was provided in \cite{febrero2017functional}. The nonparametric  {methodologies} became popular in the functional regression context with the book by \cite{ferraty2006nonparametric},  {this topic being}  widely  {addressed}
	in the last decade \citep[see][for a survey]{ling2018nonparametric}.  {In this framework, if the explanatory variables are functional,}
	nonparametric regression approaches are essentially based on an  {adaptation} of the 
	{Nadaraya--Watson}   \citep{ferraty2002functional,ferraty2006nonparametric,ling2020convergence} or the local linear regression  {estimators} \citep{aneiros2011functional,berlinet_11,boj2008local,baillo2009local,ferraty2022scalar}. 
	
	{ {As pointed our before, t}he aim of the present work is to propose and study a nonparametric regression estimator for a model with a circular response and a functional covariate.  When the response variable is circular, the  regression function   can be defined as the minimizer of
		a circular risk function. It can be proved  that the minimizer of this risk function is the inverse tangent function of the ratio between  the conditional expectation of the sine and the cosine of the response variable.  The proposal introduced in this work implicitly con\-si\-ders  {two regression models, one for the sine and another for the cosine of the response variable}. Then, a nonparametric estimator for the regression function is directly obtained by calculating the inverse tangent function of the ratio of appropriate  Nadaraya--Watson estimators for the two  regression functions of the sine and cosine models. This way of proceeding has already been con\-si\-de\-red  {previously.} For instance, a similar approach was employed in \cite{meilan2019c} and \cite{meilan2020a} for regression models with a circular response, Euclidean covariates and independent or spatially correlated errors, respectively. As in any nonparametric approach, a crucial issue  is the selection of an appropriate bandwidth or smoothing parameter. In this paper, cross-validation bandwidth selection methods adapted for the current framework are introduced and analyzed in practice. }

	{This paper is organized as follows. In Section \ref{sec:model}, the regression model for a circular response and  {a functional covariate}  is presented. In Section \ref{sec:est}, the nonparametric estimator of the circular regression function is proposed. Expressions for its asymptotic bias and variance, as well as its asymptotic distribution  are included  in Section \ref{sec:asyres}. 
		The finite sample performance of the estimator is studied through simulations in Section \ref{sec:sim}, and  illustrated in Section \ref{sec:example} with  {the} real data example on temperature curves. Finally, a discussion on this paper is included in Section \ref{sec:dis}.}
	
	\section{Functional-circular regression model}\label{sec:model}
	Let $\{({\mathcal{X}}_i,\Theta_i)\}_{i=1}^n$ be a random sample from the  random vector (${\mathcal{X}}, \Theta$) {, where $\Theta$ is a circular random variable taking values on $\mathbb{T}=[0,2\pi)$, and ${\mathcal{X}}$ is a functional variable supported on $E$, a separable Banach space endowed with a norm $\norm{\cdot}$.}  This general framework includes $L^p$, Sobolev and Besov spaces. Separability condition avoids measurability problems for the random variable $\mathcal{X}$.
	
	Assume that the circular random variable $\Theta$  depends on the functional random variable ${\mathcal{X}}$ through the following  regression model:
	\begin{equation}\label{model}
		\Theta_i=[m(\mathcal{X}_i)+\varepsilon_i](\mbox{\texttt{mod}} \, 2\pi), \quad i=1,\dots,n,
	\end{equation}
	where $m$ is a smooth trend or regression function mapping $E$ onto $\mathbb{T}$, \texttt{mod} stands for the modulo operation and $\varepsilon_i, i=1,\ldots,n$, is an independent sample of a circular {random} variable $\varepsilon$, satisfying ${\rm E}[\sin (\varepsilon)\mid\mathcal{X}=\chi]=0$ and having finite concentration. In this setting, the circular regression function $m$ in model (\ref{model}) can be defined as the minimizer of the risk function ${\rm E}\{1-\cos[\Theta-m(\mathcal{X})]\mid \mathcal{X}=\chi\}$. The minimizer of this cosine risk is given by: \begin{equation}m(\chi)=\mbox{atan2}[m_1(\chi),m_2(\chi)],
		\label{m}
	\end{equation}
	where $m_1(\chi)={\rm E}[\sin(\Theta)\mid\mathcal{X}=\chi]$, $m_2(\chi)={\rm E}[\cos(\Theta)\mid\mathcal{X}=\chi]$ and the function $\mbox{atan2}(y,x)$ returns the angle between the
	$x$-axis and the vector from the origin to $(x, y)$. With this formulation, $m_1$ and $m_2$ can be  {considered} as the regression functions of two regression models,  {being} $\sin(\Theta)$ and $\cos(\Theta)$ their  {response variables}, respectively, and with functional covariates.
	Specifically, we assume the models
	\begin{equation}
		\sin(\Theta_i)=m_1(\mathcal{X}_i)+\xi_i,\quad i=1,\dots,n,\label{model1}
	\end{equation}
	and
	\begin{equation}
		\cos(\Theta_i)=m_2(\mathcal{X}_i)+\zeta_i,\quad i=1,\dots,n,\label{model2}
	\end{equation}
	where $m_1$ and $m_2$ are regression functions mapping $E$ onto $[-1,1]$. The $\xi_i$ and the $\zeta_i$  are independent error terms, absolutely bounded by 1,  satisfying ${\rm E}(\xi\mid\mathcal{X}=\chi)={\rm E}(\zeta \mid\mathcal{X}=\chi)=0$. Additionally, for every $\chi\in E$, set $s_1^2(\chi)={\rm Var}(\xi\mid\mathcal{X}=\chi)$, $s_2^2(\chi)={\rm Var}(\zeta\mid\mathcal{X}=\chi)$ and $c(\chi)={\rm E}(\xi\zeta\mid\mathcal{X}=\chi)$.
	The assumption that  {models (\ref{model1}) and (\ref{model2}) simultaneously hold with (\ref{model}) leads to certain relations between the variances and covariances of the errors in these models}, as  {it is} described below.
	
	Set $\ell(\chi) ={\rm E}[\cos (\varepsilon)\mid\mathcal{X}=\chi]$,  $\sigma^2_1(\chi)={\rm Var}[\sin(\varepsilon)\mid\mathcal{X}=\chi]$,
	$\sigma^2_2(\chi)={\rm Var}[\cos(\varepsilon)\mid\mathcal{X}=\chi]$ and
	$\sigma_{12}(\chi)={\rm E}[\sin(\varepsilon)\cos(\varepsilon)\mid\mathcal{X}=\chi]$.
	Then, using the sine and cosine addition formulas
	in model (\ref{model}), it follows that, for $i=1, \ldots, n$,
	\begin{equation}
		\sin(\Theta_i)=\sin[m(\mathcal{X}_i)] \cos(\varepsilon_i)+\cos[m(\mathcal{X}_i)] \sin(\varepsilon_i)
		\label{eq_sin}
	\end{equation}
	and
	\begin{equation}
		\cos(\Theta_i)=\cos[m(\mathcal{X}_i)] \cos(\varepsilon_i)-\sin[m(\mathcal{X}_i)] \sin(\varepsilon_i).
		\label{eq_cos}
	\end{equation}
	Hence, defining
	$f_1(\chi)=\sin [m(\chi)]$ and $f_2(\chi)=\cos [m(\chi)]$ and applying conditional expectations in (\ref{eq_sin}) and (\ref{eq_cos}), it holds that
	\begin{eqnarray}\label{m1m2}
		m_1(\chi)=f_1(\chi)\ell(\chi)\quad\mbox{and}\quad
		m_2(\chi)=f_2(\chi)\ell(\chi).
	\end{eqnarray}Note that $f_1(\chi)$ and $f_2(\chi)$ correspond to the {normalized versions} of $m_1(\chi)$ and $m_2(\chi)$, respectively. Indeed, taking into account that  $f_1^2(\chi)+f_2^2(\chi)=1$, it can be easily deduced that $\ell(\chi)=[m^2_1(\chi)+m_2^2(\chi)]^{1/2}$. Hence,  under model $(\ref{model})$, $\ell(\chi)$ amounts to the mean resultant length of $\Theta$ given $\mathcal{X}=\chi$,  {and taking into account that it is assumed that ${\rm E}[\sin(\varepsilon)\mid \mathcal{X}=\chi]=0$}, also corresponds to the mean resultant length of $\varepsilon$ given $\mathcal{X}=\chi$. The following explicit expressions for the conditional variances of the error terms involved in models  (\ref{model1}) and (\ref{model2}), in terms of the conditional variances and covariance of the Cartesian coordinates of $\varepsilon$, can be obtained:
	\begin{equation}
		s_1^2(\chi)=f_1^2(\chi)\sigma^2_2(\chi)+2f_1(\chi)f_2(\chi)\sigma_{12}(\chi)+f_2^2(\chi)\sigma^2_1(\chi),
		\label{s12}
	\end{equation}
	\vspace*{-0.5cm}
	\begin{equation}
		s_2^2(\chi)=f_2^2(\chi)\sigma^2_2(\chi)-2f_2(\chi)f_1(\chi)\sigma_{12}(\chi)+f_1^2(\chi)\sigma^2_1(\chi),
		\label{s22}
	\end{equation}
	as well as for the covariance between the error terms in (\ref{model1}) and (\ref{model2}),
	\begin{eqnarray}
		\label{c}
		\nonumber c(\chi)&=&f_1(\chi)f_2(\chi)\sigma^2_2(\chi)-f_1^2(\chi)\sigma_{12}(\chi)\\
		&+&f_2^2(\chi)\sigma_{12}(\chi)-f_1(\chi)f_2(\chi)\sigma^2_1(\chi).
	\end{eqnarray}

	{A nonparametric Nadaraya--Watson-type estimator}  {of the circular re\-gre\-ssion function $m$ in model (\ref{model}) is presented and studied}   {in what follows.}

	\section{Nonparametric regression estimator}\label{sec:est}
	{A Nadaraya--Watson-type estimator for {$m(\chi)$}, in model (\ref{model}) can be defined by replacing {$m_1(\chi)$} and {$m_2(\chi)$} in (\ref{m}) by suitable Nadaraya--Watson estimators. Specifically, the following estimator:
		\begin{equation}\label{C_est}
			\hat{m}_{h}(\chi)={\rm atan2}[\hat{m}_{1, h}(\chi),\hat{m}_{2, h}(\chi)]
		\end{equation}
		is considered, where $\hat m_{1, h}(\chi)$ and $\hat m_{2, h}(\chi)$ denote  the Nadaraya--Watson estimators  of $ m_1(\chi)$ and $m_2(\chi)$, respectively.}
	The asymptotic  {properties (bias, variance and asymptotic normality)} of estimator (\ref{C_est}) are derived  {in the following section}.  {For this purpose, assuming that models (\ref{model1}) and (\ref{model2}) hold, the} asymptotic properties of  {$\hat m_{1, h}(\chi)$ and $\hat m_{2, h}(\chi)$} are used.\\
	
	Considering models  (\ref{model1}) and (\ref{model2}),  Nadaraya--Watson estimators for {$m_j(\chi)$}, $j=1,2$, are respectively defined as:
	\begin{equation}
		\label{C_estNadaraya--Watson}\hat m_{j, h}(\chi)=\left\{\begin{array}{lc}\dfrac{\sum_{i=1}^n K(h^{-1}\norm{\mathcal{X}_i-\chi})\sin(\Theta_i)}{\sum_{i=1}^n K(h^{-1}\norm{\mathcal{X}_i-\chi})}&\text{if $j=1$},\\\\ \dfrac{\sum_{i=1}^n K(h^{-1}\norm{\mathcal{X}_i-\chi})\cos(\Theta_i)}{\sum_{i=1}^n K(h^{-1}\norm{\mathcal{X}_i-\chi})}&\text{if $j=2$},\end{array}\right.
	\end{equation}
	where $K$ is a symmetric kernel and $h=h(n)$ is a strictly positive real-valued bandwidth{ controlling the smoothness of the estimator. Note that in this functional framework,} the support of $K$ should be contained in $\mathbb R^+$, since $ {\norm{\chi-\chi'}}\ge 0$, for all $\chi,\chi'\in E$.
	{If the kernel $K$ is positive with support on $[0,1]$, the estimators given in (\ref{C_estNadaraya--Watson}) only consider the observations $\sin(\Theta_i)$ and $\cos(\Theta_i)$, respectively,  associated to the curves $\mathcal{X}_i$ such that $\norm{\mathcal{X}_i-\chi}\le {h}$, since $K(h^{-1}\norm{\mathcal{X}_i-\chi})=0$ when the distance between $\chi$ and $\mathcal{X}_i$ is larger than $h$.} The regression  {estimators} given in (\ref{C_estNadaraya--Watson})  {are the adaptations} to the functional context of the  {classical finite-dimensional} Nadaraya--Watson  {estimators}   for the sine and cosine regression models  {\citep{ferraty2007nonparametric}}.
	
	{Although the choice of the kernel function is of secondary importance, the bandwidth parameter plays an important role in the performance of the Nadaraya--Watson estimators \eqref{C_estNadaraya--Watson} and, consequently, of the regression estimator \eqref{C_est}. When $h$ is  {excessively large}, the number of observations involved in the estimation method  {will also be too large}, and   {an} oversmoothed estimator is obtained. Conversely, if the bandwidth parameter is too small, few observations  {will be} considered in the estimation procedure, and a undersmoothed estimator is computed.
		Therefore, in this case, as in any other kernel-based estimator, in practice, data-driven bandwidth selection methods are
		needed.
		A cross-validation approach is used to select the bandwidth parameter $h$ for \eqref{C_est} in the   {simulation study and for the real data application.} {Note that the same bandwidth is {used} for computing $\hat m_{1, h}$ and $\hat m_{2, h}$ in (\ref{C_estNadaraya--Watson}). If two different bandwidths for sine and cosine components were considered, this would imply that possibly different curves were chosen for computing regression estimators of sine and cosine models, which seems quite unnatural, since the cartesian coordinates are directly related with the angle itself.}
	}

	\section{Asymptotic results}\label{sec:asyres}
	In this section, the asymptotic  bias and variance expressions for $\hat{m}_{h}(\chi)$, given in (\ref{C_est}), are derived.  Moreover, the asymptotic distribution of the proposed estimator is calculated.
	{The proofs of all these theoretical results are collected in {Appendix \ref{proofs}}.}
	
	\subsection{Asymptotic bias and variance}\label{sec:asybiasvar}
	
	{To derive} the asymptotic bias and variance of the estimator given in (\ref{C_est}), the asymptotic properties of the  Nadaraya--Watson  estimators of $m_j(\chi)$, $j=1,2,$  {given in \eqref{C_estNadaraya--Watson},} are required.  {Using the asymptotic results given in \cite{ferraty2007nonparametric}}, the asymptotic bias and  variance  of $\hat m_{j, h}(\chi)$, $j=1,2$, are  {immediately obtained}.  These   {expressions, jointly} with the covariance between $\hat m_{1, h}(\chi)$ and $\hat m_{2, h}(\chi)$, are  collected in Lemma \ref{C_pro1}. 
	
	Let $\varphi_{\chi}$ and $\varphi_{j,\chi}$, $j=1,2$, be the functions defined for all $s\in\mathbb{R}$ by:
	\begin{eqnarray}
		\varphi_{\chi}(s)=\mathbb{E}\left\{[m(\mathcal{X})-m(\chi)]\mid \norm{\mathcal{X}-\chi}=s\right\},
		\label{varphi}\end{eqnarray}
	\begin{eqnarray}\varphi_{j,\chi}(s)=\mathbb{E}\left\{[m_j(\mathcal{X})-m_j(\chi)]\mid \norm{\mathcal{X}-\chi}=s\right\},
		\label{varphi12}\end{eqnarray}
	and denote by  {$F_{\chi}$} the cumulative  {distribution} function  {of} the random variable $\norm{\mathcal{X}-\chi}$,
	$$F_{\chi}(t)=\mathbb{P}(\norm{\mathcal{X}-\chi}\le t), \quad t\in\mathbb R.$$
	In addition, denoting,  {$\forall s\in[0,1],$}
	$$\tau_{\chi,h}(s)=\dfrac{F_{\chi}(hs)}{F_{\chi}(h)}=\mathbb{P}(\norm{\mathcal{X}-\chi}\le hs\mid \norm{\mathcal{X}-\chi}\le h),$$
	the following assumptions are required. 
	\begin{enumerate}[{(C}1)]
		\item The regression functions $m_j$ and  $s_j^2$, for $j=1,2$,  are continuous in a neighborhood of $\chi\in E$, and $F_{\chi}(0)=0$.
		\item  $\varphi'_{j,\chi}(0)$ {, for $j=1,2,$} exist.
		\item The bandwidth $h$ satisfies $h\to 0$  and $nF_{\chi}(h)\to\infty$, as $n\to\infty$.
		\item  The kernel $K$ is supported on $[0,1]$ and has a continuous derivative on $[0,1)$. Moreover, $K'(s)\le 0$ and $K(1)>0$.
		\item For all $s\in [0,1]$, $\tau_{\chi,h}(s)\to \tau_{\chi,0}(s)$, when $h\to 0$.
	\end{enumerate}
	
	Assumptions (C1), (C3)--(C4) are similar  {to} those classically employed in the finite-dimensional context. Regarding   (C2), as pointed out by \cite{ferraty2007nonparametric}, this condition avoids  {the difficulties} that could appear  when con\-si\-de\-ring differential calculus on Banach spaces.  Thus,  regularity hypothesis on the regression function $m$, such as being twice continuously differentiable, are not required. In assumption (C5), the function $\tau_{\chi,0}$  {is included in the constants that are part of the main terms of the asymptotic expansions of the bias and the variance of the regression estimator \eqref{C_est}. The close expression of this function in some particular cases are provided in \citet[][Proposition 1]{ferraty2007nonparametric}.}
	
	\begin{lemma}
		\label{C_pro1}
		Let $\{(\mathcal{X}_i,\Theta_i)\}_{i=1}^n$ be a random sample  from  {(${\mathcal{X}}, \Theta$)} supported on $E\times \mathbb{T}$.  Under assumptions  \textnormal{(C1)--(C5)}, for $\chi\in E$, then, for $j=1,2$, 
		\begin{eqnarray*}
			\label{C_expp0}
			{\mathbb{E}}[\hat m_{j, h}(\chi)- m_j(\chi)]&=&\varphi_{j,\chi}'(0)\dfrac{M_{\chi,0}}{M_{\chi,1}}h+\mathcal{O}\bigg[\frac{1}{nF_{\chi}(h)}\bigg]
			+\mathpzc{o}(h),\\
			\label{C_varp0}
			{\mathbb{V}{\rm ar}}[\hat m_{j, h}(\chi)]&=&\dfrac{s_j^2(\chi)}{nF_{\chi}(h)}\dfrac{M_{\chi,2}}{M^2_{\chi,1}}+\mathpzc{o}\bigg[\frac{1}{nF_{\chi}(h)}\bigg],\\
			\mathbb{C}{\rm ov}[\hat{m}_{1, h}(\chi),\hat{m}_{2, h}(\chi)]&=&\dfrac{c(\chi)}{nF_{\chi}(h)}\dfrac{M_{\chi,2}}{M^2_{\chi,1}}+\mathpzc{o}\bigg[\frac{1}{nF_{\chi}(h)}\bigg],\label{cov_m1m2_circular_NW}
		\end{eqnarray*}
		where 
		$$
		M_{\chi,0}=K(1)-\int_0^1[sK(s)]'\tau_{\chi,0}(s)ds,
		$$
		$$
		M_{\chi,1}=K(1)-\int_0^1K'(s)\tau_{\chi,0}(s)ds
		$$
		and
		$$
		M_{\chi,2}=K^2(1)-\int_0^1(K^2)'(s)\tau_{\chi,0}(s)ds.
		$$
	\end{lemma}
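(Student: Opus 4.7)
The plan is to treat the three statements of the lemma in two stages. The bias and variance of each $\hat m_{j,h}(\chi)$ are direct applications of the main theorem on the functional Nadaraya--Watson estimator from \cite{ferraty2007nonparametric} to the scalar-response models \eqref{model1} and \eqref{model2}. The covariance between $\hat m_{1,h}(\chi)$ and $\hat m_{2,h}(\chi)$ then requires one short additional computation that mirrors the variance argument term by term, with the conditional variance $s_j^2$ replaced by the cross-moment $c$.

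For the first stage, I would verify that assumptions \textnormal{(C1)--(C5)} coincide with the hypotheses of \cite{ferraty2007nonparametric} (they are designed to match). Applied to \eqref{model1}, with scalar response $\sin(\Theta_i)$, regression function $m_1$ and conditional variance $s_1^2$, their theorem gives exactly the stated bias $\varphi_{1,\chi}'(0)(M_{\chi,0}/M_{\chi,1})\,h$ and variance $s_1^2(\chi)\,M_{\chi,2}/[nF_\chi(h)M_{\chi,1}^2]$, together with the announced $\mathcal{O}(1/[nF_\chi(h)])$ and $\mathpzc{o}$ remainders. The same argument applied to \eqref{model2} yields the $j=2$ statements.

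For the covariance, write $K_i = K(h^{-1}\norm{\mathcal{X}_i-\chi})$ and decompose $\hat m_{j,h}(\chi) - m_j(\chi) = A_j + B_j$, where $A_j$ has numerator $\sum_i K_i \xi_i$ (for $j=1$) or $\sum_i K_i \zeta_i$ (for $j=2$) divided by $\sum_i K_i$, and $B_j = \sum_i K_i\,[m_j(\mathcal{X}_i) - m_j(\chi)] / \sum_i K_i$ is the bias part. The covariance then expands as a sum of four terms. Since $\mathbb{E}[\xi_i \mid \mathcal{X}_i] = \mathbb{E}[\zeta_i \mid \mathcal{X}_i] = 0$ and each $B_j$ is $\sigma(\mathcal{X}_1,\dots,\mathcal{X}_n)$-measurable, the mixed terms $\mathbb{C}\mathrm{ov}[A_1, B_2]$ and $\mathbb{C}\mathrm{ov}[A_2, B_1]$ vanish exactly, while the bias--bias term $\mathbb{C}\mathrm{ov}[B_1, B_2]$ is $\mathpzc{o}(1/[nF_\chi(h)])$ by Cauchy--Schwarz combined with the $B_j$-variance estimates already established in \cite{ferraty2007nonparametric}. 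For the dominant noise--noise term $\mathbb{C}\mathrm{ov}[A_1, A_2]$, I would apply the standard Ferraty--Mas--Vieu linearization---replacing $n^{-1}\sum_i K_i$ by its expectation, with error controlled by \textnormal{(C3)--(C5)}---to reduce it to
$$\frac{\mathbb{E}[K_1^2\,c(\mathcal{X}_1)]}{n\,(\mathbb{E}[K_1])^2} + \mathpzc{o}\!\left(\frac{1}{nF_\chi(h)}\right),$$
using $\mathbb{E}[\xi_i\zeta_i\mid \mathcal{X}_i] = c(\mathcal{X}_i)$ and independence across $i$. Continuity of $c$ at $\chi$---inherited from \textnormal{(C1)} via \eqref{s12}--\eqref{c}---together with the moment asymptotics $\mathbb{E}[K_1^p\,g(\mathcal{X}_1)] \sim g(\chi)\,M_{\chi,p}\,F_\chi(h)$ from \cite{ferraty2007nonparametric} then yields the claimed leading term $c(\chi)\,M_{\chi,2}/[nF_\chi(h)\,M_{\chi,1}^2]$.

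The main obstacle is the one already carried out in \cite{ferraty2007nonparametric}: propagating the small-ball moment expansions through the reciprocal $1/(n^{-1}\sum_i K_i)$ while keeping uniform control of the $\mathcal{O}(1/[nF_\chi(h)])$ and $\mathpzc{o}$ remainders. Since the covariance step reuses that machinery verbatim---only the integrand $s_j^2$ is replaced by $c$---no genuinely new technical ingredient is required beyond the continuity of $c$ at $\chi$.
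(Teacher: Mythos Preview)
Your proposal is correct. For the bias and variance of $\hat m_{j,h}(\chi)$ you do exactly what the paper does: invoke the main theorem of \cite{ferraty2007nonparametric} for the scalar-on-function Nadaraya--Watson estimator applied to models \eqref{model1} and \eqref{model2}.

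For the covariance term your route differs from the paper's. The paper writes $\hat m_{j,h}(\chi)=R_j/S$ with $R_1,R_2,S$ the normalized kernel sums, and applies a delta-method/Taylor formula for the covariance of two ratios \citep{friedlander1980,liu1999stat}, which expresses $\mathbb{C}\mathrm{ov}(R_1/S,R_2/S)$ in terms of $\mu_{R_1},\mu_{R_2},\mu_S,\sigma_S^2,\mathbb{C}\mathrm{ov}(R_1,R_2),\mathbb{C}\mathrm{ov}(R_j,S)$; these pieces are then evaluated from Lemmas~4--5 of \cite{ferraty2007nonparametric} and a direct computation of $\mathbb{C}\mathrm{ov}(R_1,R_2)$. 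Your approach instead splits $\hat m_{j,h}-m_j$ into a noise part $A_j$ and a bias part $B_j$, kills the mixed $A$--$B$ covariances by conditioning on the design, controls $\mathbb{C}\mathrm{ov}(B_1,B_2)$ by Cauchy--Schwarz, and computes $\mathbb{C}\mathrm{ov}(A_1,A_2)$ via the same denominator linearization. Both arguments are standard and lead to the same leading term; your decomposition is a bit more direct because it exploits the model structure $\mathbb{E}[\xi_i\mid\mathcal{X}_i]=\mathbb{E}[\zeta_i\mid\mathcal{X}_i]=0$ to make two of the four cross-terms vanish \emph{exactly}, whereas the paper's ratio formula keeps all the pieces and cancels them asymptotically. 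The paper's route, on the other hand, is agnostic to the regression structure and would work even without the centered-error assumption.

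One small caveat: you claim continuity of $c$ at $\chi$ is ``inherited from \textnormal{(C1)} via \eqref{s12}--\eqref{c}''. Strictly speaking \textnormal{(C1)} only postulates continuity of $m_j$ and $s_j^2$, and those two relations alone do not pin down $c$. The paper simply asserts continuity of $c$ alongside that of $m_1,m_2$ when computing $\mathbb{C}\mathrm{ov}(R_1,R_2)$, so in effect both arguments use it as an implicit regularity assumption; just be aware that it is not a formal consequence of \textnormal{(C1)} as stated.
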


	\begin{remark}
		{In the case of regression models with Euclidean response and multiple covariates, the main term of the asymptotic bias of the Nadaraya--Watson estimator depends on the gradient and the Hessian matrix of the regression functions \citep[see][for further details]{hardle_muller}. Nevertheless, in the present functional setting, the leading term of the asymptotic bias of $\hat m_{j, h}$ depends on the first derivative of the functions $\varphi_{j,\chi}$ evaluated at zero.}
	\end{remark}

	Now, using the previous lemma, the following theorem provides the asymptotic bias and variance of the estimator $\hat m_{h}(\chi)$.
	
	\begin{theorem}\label{C_teoNadaraya--Watson}
		Let $\{(\mathcal{X}_i,\Theta_i)\}_{i=1}^n$ be a random sample  from  {(${\mathcal{X}}, \Theta$)} supported on $E\times \mathbb{T}$. Under assumptions  \textnormal{(C1)--(C5)}, the asymptotic  bias of estimator  $\hat{m}_{h}(\chi)$, for  $\chi\in E$, is given by:
		\begin{eqnarray}\label{bias_circular_NW}
			{\mathbb{E}}[\hat{m}_{h}(\chi)-m(\chi)]&=&\varphi_{\chi}'(0)\dfrac{M_{\chi,0}}{M_{\chi,1}}h+\mathcal{O}\bigg[\frac{1}{nF_{\chi}(h)}\bigg]
			+\mathpzc{o}(h),
		\end{eqnarray}
		and the asymptotic variance is:
		\begin{eqnarray}\label{variance_circular_NW}
			{\mathbb{V}{\rm ar}}[\hat{m}_{h}(\chi)]=\dfrac{1}{nF_{\chi}(h)} \dfrac{\sigma^2_1(\chi)}{\ell^2(\chi)}\dfrac{M_{\chi,2}}{M^2_{\chi,1}}+\mathpzc{o}\bigg[\frac{1}{nF_{\chi}(h)}\bigg].
		\end{eqnarray}
	\end{theorem}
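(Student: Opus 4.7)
The plan is to treat $\hat m_h(\chi) = \mathrm{atan2}[\hat m_{1,h}(\chi),\hat m_{2,h}(\chi)]$ as a smooth function of the vector $(\hat m_{1,h},\hat m_{2,h})$ evaluated near $(m_1(\chi),m_2(\chi))$, and to apply a delta-method style expansion together with Lemma~\ref{C_pro1}. The function $g(u,v)=\mathrm{atan2}(u,v)$ has partial derivatives $g_u=v/(u^2+v^2)$, $g_v=-u/(u^2+v^2)$, and since $m_1^2(\chi)+m_2^2(\chi)=\ell^2(\chi)$ with $m_1(\chi)=\sin[m(\chi)]\ell(\chi)$, $m_2(\chi)=\cos[m(\chi)]\ell(\chi)$, I obtain
\begin{equation*}
\hat m_h(\chi)-m(\chi)=\tfrac{\cos[m(\chi)]}{\ell(\chi)}\bigl[\hat m_{1,h}(\chi)-m_1(\chi)\bigr]-\tfrac{\sin[m(\chi)]}{\ell(\chi)}\bigl[\hat m_{2,h}(\chi)-m_2(\chi)\bigr]+R_n(\chi),
\end{equation*}
where $R_n(\chi)$ is a quadratic remainder of order $(\hat m_{1,h}-m_1)^2+(\hat m_{2,h}-m_2)^2$. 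By Lemma~\ref{C_pro1}, each of these differences is $\mathcal{O}(h)+\mathcal{O}_{\mathbb P}\bigl([nF_\chi(h)]^{-1/2}\bigr)$, so $\mathbb{E}[R_n(\chi)]$ and $\mathrm{Var}[R_n(\chi)]$ are of smaller order than the leading terms in \eqref{bias_circular_NW}--\eqref{variance_circular_NW} under (C3).

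For the bias, taking expectations in the linearization and invoking Lemma~\ref{C_pro1} reduces matters to showing the identity
\begin{equation*}
\cos[m(\chi)]\,\varphi'_{1,\chi}(0)-\sin[m(\chi)]\,\varphi'_{2,\chi}(0)=\ell(\chi)\,\varphi'_{\chi}(0).
\end{equation*}
To prove it, I would use \eqref{m1m2} together with the sine-subtraction formula to rewrite
\begin{equation*}
\cos[m(\chi)]\,\varphi_{1,\chi}(s)-\sin[m(\chi)]\,\varphi_{2,\chi}(s)=\mathbb{E}\bigl\{\sin[m(\mathcal{X})-m(\chi)]\,\ell(\mathcal{X})\,\big|\,\norm{\mathcal{X}-\chi}=s\bigr\},
\end{equation*}
expand $\sin[m(\mathcal{X})-m(\chi)]=m(\mathcal{X})-m(\chi)+\mathcal{O}([m(\mathcal{X})-m(\chi)]^3)$ and use continuity of $\ell$ at $\chi$ (from (C1), since $\ell^2=m_1^2+m_2^2$) to replace $\ell(\mathcal{X})$ by $\ell(\chi)+\order{1}$ as $s\to 0$. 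Differentiating at $0$ gives the required identity; combining it with the common factor $M_{\chi,0}/(\ell(\chi)M_{\chi,1})\cdot h$ yields \eqref{bias_circular_NW}.

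For the variance, I square the linearization and take variances, producing a linear combination of $\mathrm{Var}[\hat m_{1,h}]$, $\mathrm{Var}[\hat m_{2,h}]$ and $\mathrm{Cov}[\hat m_{1,h},\hat m_{2,h}]$, for which Lemma~\ref{C_pro1} supplies the leading behaviour. The bulk of the work is then purely algebraic: substituting \eqref{s12}, \eqref{s22} and \eqref{c} with $f_1=\sin[m(\chi)]$, $f_2=\cos[m(\chi)]$ into
\begin{equation*}
\tfrac{1}{\ell^2(\chi)}\bigl\{\cos^2[m(\chi)]\,s_1^2(\chi)+\sin^2[m(\chi)]\,s_2^2(\chi)-2\sin[m(\chi)]\cos[m(\chi)]\,c(\chi)\bigr\},
\end{equation*}
and checking that the $\sigma_2^2(\chi)$ and $\sigma_{12}(\chi)$ terms cancel while the $\sigma_1^2(\chi)$ coefficient collapses to $\sin^4+\cos^4+2\sin^2\cos^2=1$. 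This leaves the factor $\sigma_1^2(\chi)/\ell^2(\chi)$ in \eqref{variance_circular_NW}.

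The main obstacle is the rigorous control of the remainder $R_n(\chi)$ and the replacement of $\ell(\mathcal{X})$ by $\ell(\chi)$ in the conditional-expectation identity, since assumption (C2) is stated only for $\varphi_{j,\chi}$ and not directly for $\varphi_\chi$ or for an analogous functional of $\ell$. I would handle this by noting that the identity above expresses $\cos[m(\chi)]\varphi_{1,\chi}(s)-\sin[m(\chi)]\varphi_{2,\chi}(s)$ as $\ell(\chi)\varphi_\chi(s)[1+\order{1}]$ uniformly as $s\to 0$, so that differentiability at $0$ of the left-hand side (which follows from (C2)) transfers to $\varphi_\chi$ with the stated leading coefficient, and the $\mathcal{O}[1/(nF_\chi(h))]+\order{h}$ remainder in \eqref{bias_circular_NW} is absorbed from Lemma~\ref{C_pro1} together with $R_n(\chi)$.
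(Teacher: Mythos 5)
Your proposal is correct and follows essentially the same route as the paper: a Taylor/delta-method expansion of $\mathrm{atan2}$ about $(m_1(\chi),m_2(\chi))$, Lemma~\ref{C_pro1} for the moments of the component estimators, the identity $m_2(\chi)\varphi'_{1,\chi}(0)-m_1(\chi)\varphi'_{2,\chi}(0)=\ell^2(\chi)\varphi'_{\chi}(0)$ to collapse the bias, and the algebraic cancellation $f_2^2(\chi)s_1^2(\chi)+f_1^2(\chi)s_2^2(\chi)-2f_1(\chi)f_2(\chi)c(\chi)=\sigma_1^2(\chi)$ for the variance. The only cosmetic difference is that you derive the bias identity from the exact relation $f_2(\chi)m_1(\mathcal{X})-f_1(\chi)m_2(\mathcal{X})=\ell(\mathcal{X})\sin[m(\mathcal{X})-m(\chi)]$, whereas the paper re-expands $\mathrm{atan2}[m_1(\mathcal{X}),m_2(\mathcal{X})]$ around $[m_1(\chi),m_2(\chi)]$ and takes conditional expectations; both versions leave the differentiation of the higher-order remainder at $s=0$ at the same level of informality.
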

	
	{The asymptotic expressions of the bias and variance obtained in Theorem~\ref{C_teoNadaraya--Watson} can be used to derive the asymptotic mean squared error (AMSE) of $\hat{m}_{h}(\chi)$. An asymptotically optimal local bandwidth for $\hat{m}_{h}(\chi)$  {can be} selected mi\-ni\-mi\-zing this error criterion with respect to $h$. A plug-in bandwidth selector could be defined replacing the unknown quantities in the asymptotic optimal parameter by appropriate estimates. This problem is more complex in this functional context than in the finite-dimensional setting \citep{ferraty2007nonparametric}. Moreover, the design of global plug-in selectors would require obtaining integrated versions of the AMSE. Making this issue rigorous poses challenges that continue to be topics of current research. Therefore, in this paper, we avoid using plug-in bandwidth selectors and, as pointed out before, {we} employ a suitable adapted cross-validation technique to select the bandwidth for $\hat{m}_{h}$ in the numerical studies (simulations and real data application).}

	\subsection{Asymptotic normality}\label{sec:asynorm}
	
	Next, the asymptotic distribution of the proposed nonparametric regression estimator  {\eqref{C_est}} is derived. 
	Apart from  the  {assumptions} stated in Section \ref{sec:asybiasvar} for computing its asymptotic bias and variance, the following  {condition}  is needed:
	
	\begin{enumerate}[{(C}6)]
		\item  $\varphi'_{\chi}(0)\neq 0$ and $M_{\chi,0}$ in Lemma \ref{C_pro1} is strictly positive.
	\end{enumerate}
	
	Condition (C6) is required to ensure that the leading term of the  bias does not vanish. The first part of this assumption is similar to the one stated in the finite-dimensional
	setting {. The second part of (C6) is specific for the infinite-dimensional
		framework and it is satisfied in some standard situations.} For example, if  $\tau_{\chi,0}(s)\neq \mathbb{I}_{(0,1]}(s)$ (being  $\mathbb{I}_{(0,1]}$ the indicator function on the set
	$(0, 1]$)  and $\tau_{\chi,0}$ is continuously differentiable on $(0, 1)$, or if $\tau_{\chi,0}(s)=\delta_1(s)$ (where $\delta_1$ stands for the Dirac mass at $1$), then $M_{\chi,0}>0$ for any
	kernel $K$ satisfying  (C4). For further details on  condition (C6), we refer to  \cite{ferraty2007nonparametric}.
	
	The following theorem  {provides} the asymptotic distribution of the nonparametric regression estimator proposed in (\ref{C_est}).
	\begin{theorem}
		\label{eq:theorem2}
		Under assumptions \textnormal{(C1)--(C6)}, it can be proved that {, as $n\to\infty,$}
		$$\sqrt{n{F}_{\chi}(h)}\dfrac{\ell(\chi)M_{\chi,1}}{\sqrt{\sigma^2_1(\chi)M_{\chi,2}}}[\hat{m}_{h}(\chi)-m(\chi)-\bm{B}_h]\xrightarrow[]{\mathcal{L}}N(0,1),$$
		where $\xrightarrow[]{\mathcal{L}}$ denotes convergence in distribution, with
		$$\bm{B}_h= \varphi_{\chi}'(0)\dfrac{M_{\chi,0}}{M_{\chi,1}}h.$$
	\end{theorem}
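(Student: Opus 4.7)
The plan is to derive Theorem~\ref{eq:theorem2} by combining a joint central limit theorem for $(\hat m_{1,h}(\chi),\hat m_{2,h}(\chi))$ with a delta-method argument applied to the map $g(y_1,y_2)=\text{atan2}(y_1,y_2)$ at the point $(m_1(\chi),m_2(\chi))$. Since $m(\chi)=g(m_1(\chi),m_2(\chi))$ by (\ref{m}), and since assumption (C6) together with (\ref{m1m2}) forces $\ell(\chi)>0$, so that $m_1^2(\chi)+m_2^2(\chi)=\ell^2(\chi)>0$, the pair $(m_1(\chi),m_2(\chi))$ is away from the singular origin of atan2 and $g$ is continuously differentiable in a neighborhood, with gradient
$$\nabla g(m_1(\chi),m_2(\chi))=\lrp{\dfrac{m_2(\chi)}{\ell^2(\chi)},-\dfrac{m_1(\chi)}{\ell^2(\chi)}}=\lrp{\dfrac{f_2(\chi)}{\ell(\chi)},-\dfrac{f_1(\chi)}{\ell(\chi)}}.$$

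The first step is to upgrade the marginal expressions of Lemma~\ref{C_pro1} into joint asymptotic normality. By the Cram\'er--Wold device, it is enough to show that for every $(a,b)\in\mathbb{R}^2$ the combination $a\hat m_{1,h}(\chi)+b\hat m_{2,h}(\chi)$ is asymptotically normal; but this combination is itself a functional Nadaraya--Watson estimator, with the same kernel weights, of the regression of $a\sin(\Theta)+b\cos(\Theta)$ on $\mathcal{X}$, so its asymptotic normality is an immediate consequence of the scalar-response functional CLT of \cite{ferraty2007nonparametric} under (C1)--(C6). Combined with the cross-covariance obtained in Lemma~\ref{C_pro1}, this yields
$$\sqrt{nF_{\chi}(h)}\lrc{\begin{pmatrix}\hat m_{1,h}(\chi)\\ \hat m_{2,h}(\chi)\end{pmatrix}-\begin{pmatrix}m_1(\chi)\\ m_2(\chi)\end{pmatrix}-h\dfrac{M_{\chi,0}}{M_{\chi,1}}\begin{pmatrix}\varphi_{1,\chi}'(0)\\ \varphi_{2,\chi}'(0)\end{pmatrix}}\xrightarrow{\mathcal{L}}N(\zero,\bSigma),$$
where $\bSigma$ has diagonal entries $s_j^2(\chi)M_{\chi,2}/M_{\chi,1}^2$ and off-diagonal entry $c(\chi)M_{\chi,2}/M_{\chi,1}^2$.

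The delta method then reduces the rest to algebra. The limiting variance of $\hat m_h(\chi)$ becomes
$$\dfrac{M_{\chi,2}}{M_{\chi,1}^2}\lrc{\dfrac{f_2^2(\chi)}{\ell^2(\chi)}s_1^2(\chi)+\dfrac{f_1^2(\chi)}{\ell^2(\chi)}s_2^2(\chi)-2\dfrac{f_1(\chi)f_2(\chi)}{\ell^2(\chi)}c(\chi)},$$
and substituting the identities (\ref{s12}), (\ref{s22}) and (\ref{c}) collapses the $\sigma_2^2$ and $\sigma_{12}$ contributions to zero, while the $\sigma_1^2$ coefficient simplifies to $(f_1^2(\chi)+f_2^2(\chi))^2=1$, producing $\sigma_1^2(\chi)M_{\chi,2}/[\ell^2(\chi)M_{\chi,1}^2]$ as required. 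For the centering, a first-order expansion of $m_1(\mathcal{X})=\sin[m(\mathcal{X})]\ell(\mathcal{X})$ and $m_2(\mathcal{X})=\cos[m(\mathcal{X})]\ell(\mathcal{X})$ around $\chi$, followed by conditioning on $\norm{\mathcal{X}-\chi}=s$ and differentiating at $s=0$, gives
$$\varphi'_{1,\chi}(0)=f_2(\chi)\ell(\chi)\varphi'_\chi(0)+f_1(\chi)\psi'_\chi(0),\qquad \varphi'_{2,\chi}(0)=-f_1(\chi)\ell(\chi)\varphi'_\chi(0)+f_2(\chi)\psi'_\chi(0),$$
where $\psi_\chi(s)=\mathbb{E}[\ell(\mathcal{X})-\ell(\chi)\mid\norm{\mathcal{X}-\chi}=s]$. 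The delta-step linear combination $(f_2(\chi)/\ell(\chi))\varphi'_{1,\chi}(0)-(f_1(\chi)/\ell(\chi))\varphi'_{2,\chi}(0)$ then cancels the $\psi'_\chi(0)$ pieces and leaves $(f_1^2(\chi)+f_2^2(\chi))\varphi'_\chi(0)=\varphi'_\chi(0)$, exactly recovering the bias $\bm{B}_h=\varphi'_\chi(0)(M_{\chi,0}/M_{\chi,1})h$.

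The step I expect to be the main obstacle is the careful handling of the branch discontinuity of atan2: it has a $2\pi$-jump across the negative horizontal semi-axis, so the delta-method argument is valid only on a neighborhood of $(m_1(\chi),m_2(\chi))$ where $g$ is smooth. One must therefore first show that the event that $(\hat m_{1,h}(\chi),\hat m_{2,h}(\chi))$ lies in such a neighborhood has probability tending to one, which follows from consistency of the two Nadaraya--Watson estimators (Lemma~\ref{C_pro1} gives bias $\Order{h}$ and variance $\order{1}$ under (C3)) together with $\ell(\chi)>0$. On this event the second-order Taylor remainder is $\Orderp{h^2+1/[nF_\chi(h)]}$, which is $\orderp{1/\sqrt{nF_\chi(h)}}$ under the standard bandwidth condition $\sqrt{nF_\chi(h)}\,h^2\to 0$, so Slutsky's lemma combined with the joint CLT above delivers the stated one-dimensional limit.
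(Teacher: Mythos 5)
Your proposal is correct and rests on the same two pillars as the paper's proof --- the componentwise asymptotics of Lemma~\ref{C_pro1} (imported from Ferraty--Mas--Vieu) and a delta-method linearization of $\mathrm{atan2}$ --- but the technical organization is genuinely different. The paper first proves a univariate CLT for the ratio $\hat m_{1,h}(\chi)/\hat m_{2,h}(\chi)$ (writing its centered version as an array of independent terms plus a Slutsky correction) and then invokes Serfling's Theorem A together with Theorem~\ref{C_teoNadaraya--Watson}; you instead establish joint asymptotic normality of the vector $(\hat m_{1,h}(\chi),\hat m_{2,h}(\chi))$ by Cram\'er--Wold, observing that any combination $a\hat m_{1,h}+b\hat m_{2,h}$ is itself a functional Nadaraya--Watson estimator, and then apply the bivariate delta method directly to $\mathrm{atan2}$. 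Your route is arguably cleaner: it avoids the awkward fact that $\mathrm{atan2}$ is not a function of the ratio alone, and it makes explicit the branch-cut/consistency argument that the paper leaves implicit. The variance algebra is identical to the paper's (both reduce to $m_1^2(\chi)s_2^2(\chi)+m_2^2(\chi)s_1^2(\chi)-2m_1(\chi)m_2(\chi)c(\chi)=\ell^2(\chi)\sigma_1^2(\chi)$). Two small caveats. First, in the bias step you derive $\varphi'_\chi(0)=[f_2(\chi)/\ell(\chi)]\varphi'_{1,\chi}(0)-[f_1(\chi)/\ell(\chi)]\varphi'_{2,\chi}(0)$ by expanding $m_j=f_j\ell$ and introducing $\psi_\chi$, which tacitly requires $\psi'_\chi(0)$ to exist --- a smoothness condition on $\ell$ not among (C1)--(C6); the paper obtains the same cancellation by expanding $\mathrm{atan2}$ at $[m_1(\mathcal{X}),m_2(\mathcal{X})]$, conditioning on $\norm{\mathcal{X}-\chi}=s$ and differentiating, which uses only (C2). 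Second, your control of the quadratic remainder invokes $\sqrt{nF_\chi(h)}\,h^2\to 0$, which is likewise not listed in (C1)--(C6); here you are in fact being more careful than the paper, whose proof silently discards the same $\mathcal{O}(h^2)$ and $\mathcal{O}(1/[nF_\chi(h)])$ contributions after normalization, but strictly speaking this is an extra hypothesis rather than a consequence of (C3).
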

	
	A simpler version of the previous theorem can be derived just by considering the following additional assumption:
	
	\begin{enumerate}[{(C}7)]
		\item  $\lim_{n\to\infty}h\sqrt{n{F}_{\chi}(h)}=0$.
	\end{enumerate}
	
	Condition (C7) allows the cancellation of the bias term  {$\bm{B}_h$} in Theorem  \ref{eq:theorem2}. Under this assumption (and also if \textnormal{(C1)--(C6)} hold), the pointwise asymptotic  {Gaussian} distribution for the functional nonparametric regression
	estimate is given in Corollary \ref{eq:cor1}.
	
	\begin{corollary}
		\label{eq:cor1}
		Under assumptions \textnormal{(C1)--(C7)}, it can be proved that {, as $n\to\infty,$}
		$$\sqrt{n{F}_{\chi}(h)}\dfrac{\ell(\chi)M_{\chi,1}}{\sqrt{\sigma^2_1(\chi)M_{\chi,2}}}[\hat{m}_{h}(\chi)-m(\chi)]\xrightarrow[]{\mathcal{L}} N(0,1).$$
	\end{corollary}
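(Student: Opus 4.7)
The plan is to derive this corollary as an essentially immediate consequence of Theorem~\ref{eq:theorem2} by showing that condition (C7) kills the bias contribution, and then invoking Slutsky's theorem. First I would rewrite the target quantity by adding and subtracting the bias term:
\begin{equation*}
\sqrt{nF_{\chi}(h)}\,\frac{\ell(\chi)M_{\chi,1}}{\sqrt{\sigma^2_1(\chi)M_{\chi,2}}}[\hat{m}_{h}(\chi)-m(\chi)]
= \sqrt{nF_{\chi}(h)}\,\frac{\ell(\chi)M_{\chi,1}}{\sqrt{\sigma^2_1(\chi)M_{\chi,2}}}[\hat{m}_{h}(\chi)-m(\chi)-\bm{B}_h] + A_n,
\end{equation*}
where
\begin{equation*}
A_n = \sqrt{nF_{\chi}(h)}\,\frac{\ell(\chi)M_{\chi,1}}{\sqrt{\sigma^2_1(\chi)M_{\chi,2}}}\,\bm{B}_h
= \frac{\ell(\chi)\,\varphi_{\chi}'(0)\,M_{\chi,0}}{\sqrt{\sigma^2_1(\chi)M_{\chi,2}}}\cdot h\sqrt{nF_{\chi}(h)}.
\end{equation*}

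Next I would argue that $A_n\to 0$: the prefactor is a finite, fixed constant (using that $\ell(\chi)>0$, $\sigma^2_1(\chi)>0$, $M_{\chi,2}>0$ under (C4), with $M_{\chi,1}$ and $M_{\chi,0}$ finite from (C4)--(C5), and $\varphi'_{\chi}(0)$ finite by the same reasoning as in (C2)), while the factor $h\sqrt{nF_{\chi}(h)}$ converges to zero by (C7). Therefore $A_n$ is a deterministic $\mathpzc{o}(1)$ sequence.

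Finally, by Theorem~\ref{eq:theorem2}, the first term on the right-hand side converges in distribution to $N(0,1)$. Since $A_n \to 0$ deterministically, Slutsky's theorem yields the claimed convergence. I do not foresee any genuine obstacle here, since this is a standard bias-cancellation argument; the only minor point to check carefully is that all the constants appearing in the normalizing factor are finite and nonzero so that $A_n$ is well-defined and indeed tends to zero, but this is guaranteed by the standing assumptions (C1)--(C6) and the non-degeneracy conditions embedded in the definition of the functional Nadaraya--Watson framework.
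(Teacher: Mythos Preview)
Your argument is correct and matches the paper's own approach, which simply states that the result follows by combining Theorem~\ref{eq:theorem2} with assumption~(C7). Your decomposition and Slutsky-based reasoning make explicit exactly what the paper leaves implicit.
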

	
	Notice that in order to use the asymptotic distribution of $\hat{m}_{h}$ in practice, for inferential purposes, it is necessary to
	{estimate the functions $\ell$ and {$\sigma_1^2$}. Moreover, both constants $M_{\chi,1}$ and $M_{\chi,2}$ have to be computed. Assuming a uniform kernel function, $K(u)=\mathbb{I}_{[0,1]}(u)$, it is obvious that $M_{\chi,1} = M_{\chi,2} = 1$.}
	
	\begin{corollary}
		\label{eq:cor2}
		Under assumptions \textnormal{(C1)--(C7)},  {considering} $K(u)=\mathbb{I}_{[0,1]}(u)$ and if   {$\hat{\sigma}_1^2$ an $\hat{\ell}$ are consistent estimators of ${\sigma}_1^2$ and $\ell$, respectively}, it can be proved that
		$$\sqrt{n\hat{F}_{\chi}(h)}\dfrac{ {\hat{\ell}(\chi)}}{\sqrt{\hat{\sigma}^2_1(\chi)}}[\hat{m}_{h}(\chi)-m(\chi)]\xrightarrow[]{\mathcal{L}} N(0,1) \text{ as } n\to\infty,$$
		where 
		$$\hat{F}_{\chi}(h)=\dfrac{1}{n}\sum_{i=1}^n {\mathbb{I}_{\{{\norm{\mathcal{X}_i-\chi}\le h}\}},}$$
		{being $\mathbb{I}_{\{A \}}=1$ if $A$ is true and 0 otherwise.}
	\end{corollary}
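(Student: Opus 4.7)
The plan is to deduce Corollary \ref{eq:cor2} from Corollary \ref{eq:cor1} via Slutsky's theorem, after first simplifying the constants $M_{\chi,1}$ and $M_{\chi,2}$ under the uniform-kernel choice and then showing that each plugged-in quantity converges in probability to the population quantity it replaces. Explicitly, I would write
\begin{equation*}
\sqrt{n\hat{F}_{\chi}(h)}\,\dfrac{\hat{\ell}(\chi)}{\sqrt{\hat{\sigma}^2_1(\chi)}}[\hat{m}_{h}(\chi)-m(\chi)]=R_n\cdot\sqrt{nF_{\chi}(h)}\,\dfrac{\ell(\chi)M_{\chi,1}}{\sqrt{\sigma^2_1(\chi)M_{\chi,2}}}[\hat{m}_{h}(\chi)-m(\chi)],
\end{equation*}
where the remainder factor is
\begin{equation*}
R_n=\sqrt{\dfrac{\hat{F}_{\chi}(h)}{F_{\chi}(h)}}\cdot\dfrac{\hat{\ell}(\chi)}{\ell(\chi)}\cdot\sqrt{\dfrac{\sigma^2_1(\chi)}{\hat{\sigma}^2_1(\chi)}}\cdot\dfrac{\sqrt{M_{\chi,2}}}{M_{\chi,1}},
\end{equation*}
and by Corollary \ref{eq:cor1} the second factor already converges in law to $N(0,1)$. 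It then suffices to prove $R_n\xrightarrow{\mathbb{P}}1$ and invoke Slutsky's theorem.

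For the kernel constants, with $K(u)=\mathbb{I}_{[0,1]}(u)$ one has $K(1)=1$ and $K'(s)=0$ on $[0,1)$, hence $M_{\chi,1}=K(1)-\int_0^1 K'(s)\tau_{\chi,0}(s)\,ds=1$, and likewise $(K^2)'(s)=0$ on $[0,1)$ gives $M_{\chi,2}=K^2(1)=1$. Thus under the uniform kernel the normalising constants in Corollary \ref{eq:cor1} collapse to $1$, which is precisely why they disappear from the statement of Corollary \ref{eq:cor2}.

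The only genuinely new ingredient is the consistency of the empirical small-ball mass, $\hat{F}_\chi(h)/F_\chi(h)\xrightarrow{\mathbb{P}}1$. Noting that $n\hat{F}_\chi(h)=\sum_{i=1}^n\mathbb{I}_{\{\norm{\mathcal{X}_i-\chi}\le h\}}$ is a sum of i.i.d.\ Bernoulli$(F_\chi(h))$ variables, one has $\mathbb{E}[\hat{F}_\chi(h)]=F_\chi(h)$ and $\mathrm{Var}[\hat{F}_\chi(h)]=F_\chi(h)(1-F_\chi(h))/n$, so Chebyshev's inequality gives
\begin{equation*}
\mathbb{P}\!\left(\left|\dfrac{\hat{F}_\chi(h)}{F_\chi(h)}-1\right|>\varepsilon\right)\le\dfrac{1-F_\chi(h)}{\varepsilon^2\,nF_\chi(h)}\longrightarrow 0
\end{equation*}
since $nF_\chi(h)\to\infty$ by assumption (C3). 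Combined with the assumed consistency $\hat{\ell}(\chi)\xrightarrow{\mathbb{P}}\ell(\chi)$ and $\hat{\sigma}^2_1(\chi)\xrightarrow{\mathbb{P}}\sigma^2_1(\chi)$ (the latter being strictly positive so that the continuous mapping theorem applies to $1/\sqrt{\hat{\sigma}^2_1}$), and with $M_{\chi,1}=M_{\chi,2}=1$ from the previous step, every factor in $R_n$ tends to $1$ in probability.

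The main obstacle is really just the slightly delicate verification that $\hat{F}_\chi(h)/F_\chi(h)\to 1$ in probability in spite of the fact that $F_\chi(h)\to 0$ as $h\to 0$; the small-ball probability plays the role of a vanishing denominator, so one has to use condition (C3) explicitly to kill the $1/(nF_\chi(h))$ bound. Everything else is a routine application of Slutsky's theorem and the continuous mapping theorem, combined with the kernel simplification already noted.
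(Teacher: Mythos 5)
Your proof is correct and follows the same overall architecture as the paper's: reduce to Theorem \ref{eq:theorem2}/Corollary \ref{eq:cor1} via Slutsky's theorem, after checking that the uniform kernel gives $M_{\chi,1}=M_{\chi,2}=1$ and that each plugged-in quantity is ratio-consistent. The one step where you genuinely diverge is the treatment of $\hat{F}_{\chi}(h)/F_{\chi}(h)\xrightarrow{\mathbb{P}}1$: the paper simply invokes the Glivenko--Cantelli theorem, whereas you prove it directly by noting that $n\hat{F}_{\chi}(h)$ is Binomial$(n,F_{\chi}(h))$ and applying Chebyshev together with (C3). Your route is arguably the more careful one here, since uniform convergence $\sup_t|\hat{F}_{\chi}(t)-F_{\chi}(t)|\to 0$ does not by itself control the \emph{ratio} when the denominator $F_{\chi}(h)\to 0$ as $h\to 0$; the divisor $nF_{\chi}(h)\to\infty$ from (C3) is exactly what is needed, and you use it explicitly where the paper leaves it implicit. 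The only cosmetic caveat is that your argument tacitly requires $\sigma_1^2(\chi)>0$ and $\ell(\chi)>0$ so that the continuous mapping theorem applies to the reciprocals, but these are already implicit in the normalisation of Theorem \ref{eq:theorem2}, so nothing is missing in substance.
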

	
	\begin{remark}  {An immediate consequence of Corollary \ref{eq:cor2} is {the possibility of cons\-truc\-ting (asymptotic) confidence intervals} for the circular regression function.  Fixing a confidence level $1-\alpha$, {with} $\alpha\in(0,1)$, it can be easily obtained that an {asymptotic} confidence interval for  $m$ at $\chi\in E$ is: 
			{$$\left(\hat{m}_{h}(\chi)\mp z_{\alpha/2}\frac{1}{\sqrt{n\hat{F}_{\chi}(h)}}\frac{\sqrt{\hat{\sigma}^2_1(\chi)}}{ {\hat{\ell}(\chi)}}\right),$$}
			where $z_{\alpha /2}$ denotes the $(1-\alpha)$-quantile of the standard normal distribution {and $\hat\sigma_1^2(\chi)$ and $\hat\ell(\chi)$ are (consistent) estimators of $\sigma_1^2(\chi)$ and $\ell(\chi)$, respectively.}}
		
		Among the possible (consistent) conditional variance estimators for {$\sigma_1^2$}, we suggest {to use the following method based on adapting to the functional-circular framework}  the approach studied in \cite{mario_var}. {In the present context,} taking into account that
		\begin{eqnarray*}
			\label{var_cond}
			\sigma^2_1(\chi)={\rm Var}[\sin(\varepsilon)\mid\mathcal{X}=\chi]
			=\mathbb{E}[\sin^2(\varepsilon)\mid\mathcal{X}=\chi]-\{\mathbb{E}[\sin(\varepsilon)\mid\mathcal{X}=\chi]\}^2,
		\end{eqnarray*}	
		and under the assumption that $\mathbb{E}[\sin(\varepsilon)\mid\mathcal{X}=\chi]=0$, only $\mathbb{E}[\sin^2(\varepsilon)\mid\mathcal{X}=\chi]$ has to be estimated. For this, in a first step, considering model \eqref{model}, the residuals from a nonparametric fit (using, for instance, the Nadaraya--Watson estimator with a pilot bandwidth) are obtained. Then, in a second step, the estimator of the conditional variance function {$\sigma_1^2$} is defined as the nonparametric estimator (employing again the Nadaraya--Watson estimator with another bandwidth) of the regression function using the squared sine of the residuals as the response variables.
		
		{Regarding function $\ell$, taking into account that $\ell(\chi)=[m^2_1(\chi)+m_2^2(\chi)]^{1/2}$, a consistent estimator for this function is straightforwardly obtained replacing in this equation $m_j$ by the Nadaraya--Watson estimators $\hat{m}_{j,h}$, $j=1,2$, defined in \eqref{C_estNadaraya--Watson}. }

	\end{remark}

	\section{Simulation study}\label{sec:sim}
	{In this section, the regression estimator  proposed in (\ref{C_est}) is analyzed nu\-me\-ri\-ca\-lly by simulation,} considering different scenarios and model (\ref{model}). For each scenario, 500 samples of size $n$ ($n=50, 100, 200$ and $400$) are generated   {using} the following regression  {functions}:
	{
		\begin{eqnarray*}
			\text{r1: } m(\mathcal{X})&=&{\rm atan2}(0.5+\textstyle\int^\mathbb{I}_0\mathcal{X}(t)dt,\textstyle\int^\mathbb{I}_0\mathcal{X}(t)dt),\\
			\text{r2: } m(\mathcal{X})&=&{\rm acos}(-0.3\textstyle\int^\mathbb{I}_0\mathcal{X}(t)dt)+3/2{\rm acos}(0.4\textstyle\int^\mathbb{I}_0\mathcal{X}(t)dt),
		\end{eqnarray*}
	}
	where the random curves are simulated from 
	\begin{equation}
		\mathcal{X}(t)=30U(1-t)t^{1+U}, \quad t\in[0,1],\label{Xt}\end{equation}
	being $U$ a uniformly distributed variable on $[0,1]$ {, and t}he circular errors $\varepsilon$  {in (\ref{model})} are drawn from a von Mises distribution $vM(0,\kappa)$,  {with} different  concentrations ($\kappa=5, 10$ and $15$). 
	
		\begin{figure*}[htb]
	\centering
	\includegraphics[width=1\textwidth]{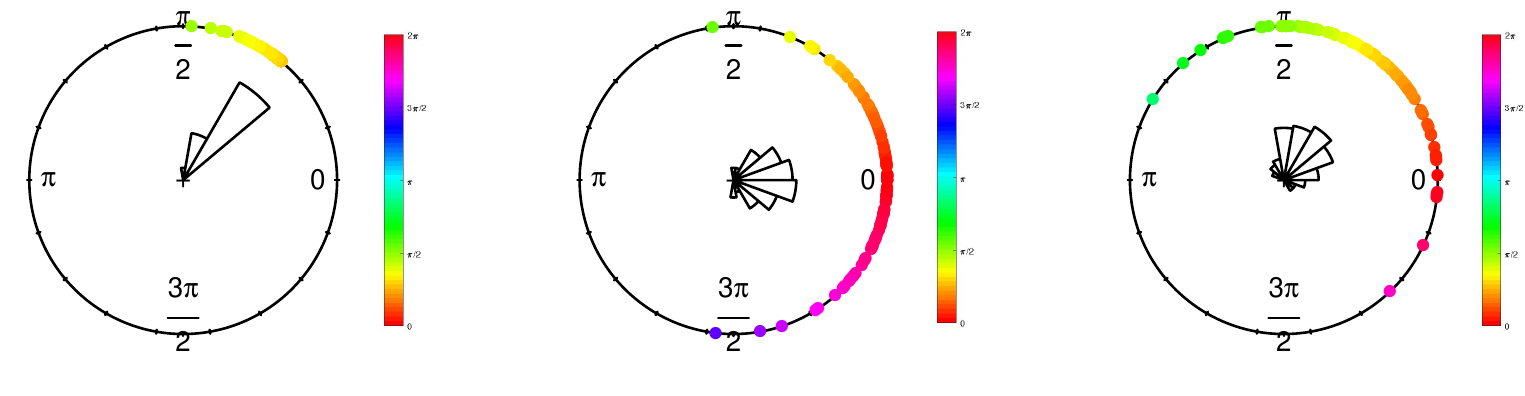}\vspace{0.6cm}
	\includegraphics[width=1\textwidth]{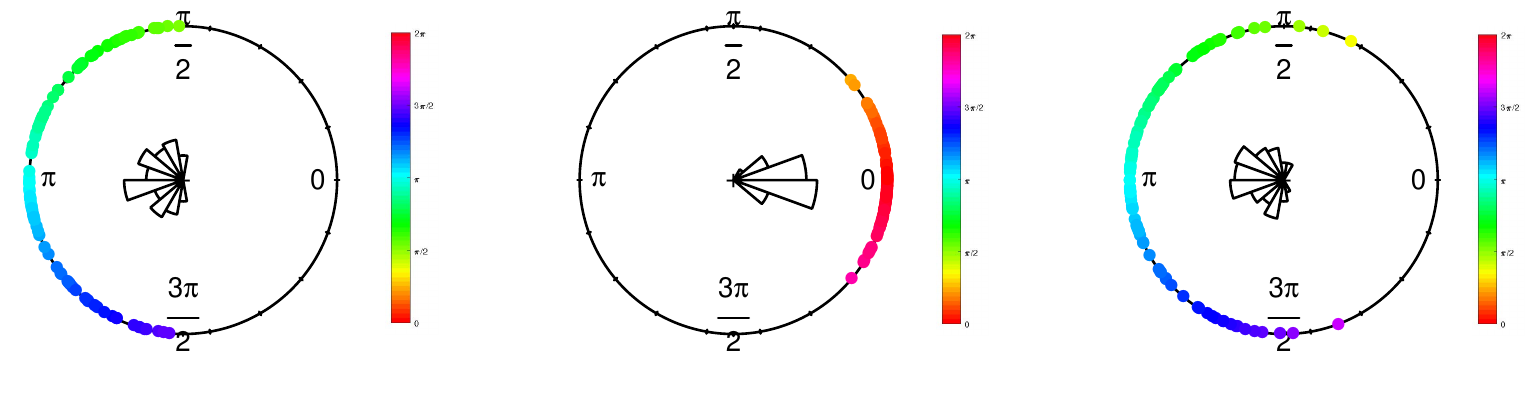}
	\caption{Illustration of model generation (model  {with regression function r1}: top row; model  {with regression function r2}: bottom row). Left: regression function evaluated at a random sample of size $n=100$ of curves simulated from (\ref{Xt}). Center: independent errors from a von Mises distribution with zero mean and concentration $\kappa=5$ (for model with regression function r1) and $\kappa=15$ (for model with regression function r2). Right: random responses  obtained by adding the two previous plots.}
	\label{figure:process}
\end{figure*}
	
		\begin{table}[h!]
		\begin{center}
			\caption{Average  CASE given in (\ref{CASE}), over 500 replicates, using a cross-validation bandwidth, ${h}_{{\rm CV}}$,  and an optimal bandwidth, ${h}_{{\rm CASE}}$, as a benchmark. Data are generated from  models with regression functions r1 (left) and r2 (right) and von Mises errors with different concentration parameters ($\kappa=5, 10, 15$).}
			\begin{tabular}{cccccccccccc}
				&  &  & &  {r1} &  &&&&& {r2}& \\ 
				\cline{4-6}\cline{10-12}
				$\kappa$&$n$&&$h_{{\rm CV}}$&&$h_{{\rm CASE}}$&&&&$h_{{\rm CV}}$&&$h_{{\rm CASE}}$\\
				\hline
				5 & 50 &  & 0.0096 &  &0.0037 &&&&0.0122&& 0.0087 \\ 
				& 100 &  & 0.0046 &  & 0.0021 &&&&0.0096&& 0.0069\\ 
				& 200 &  & 0.0011 &  & 0.0005 &&&&0.0052&& 0.0047\\ 
				& 400 &  &  0.0004&  &  0.0002&&&&0.0030&& 0.0023\\ 
				\hline10 & 50 &  & 0.0020 &  & 0.0018&&&&0.0072&& 0.0058\\ 
				& 100 &  & 0.0010 &  & 0.0007 &&&&0.0062&&0.0046\\ 
				& 200 &  &0.0005  &  &  0.0003 &&&&0.0033&&0.0028 \\ 
				& 400 &  & 0.0004 &  &  9e-05 &&&&0.0026&&0.0021\\ 
				\hline15 & 50 && 0.0019    &  & 0.0017 &&&&0.0056&& 0.0048 \\ 
				& 100 &  &0.0007  &  &  0.0005 &&&&0.0056&&0.0039\\ 
				& 200 &  & 0.0004 &  & 0.0003 &&&&0.0021&&0.0019 \\ 
				& 400 &  & 0.0002 &  &  8e-05
				&&&&0.0015&&0.0013\\ 
				\hline
			\end{tabular}
			\label{table}
		\end{center}
	\end{table}
	
	{As an illustration,} Figure \ref{figure:process} shows two realizations of simulated data of size $n=100$ (model  {with regression function r1} in top row and model  {with regression function r2} in bottom row). Left plots display the circular regression  {functions} evaluated in the curves generated from (\ref{Xt}). Central panels present the random errors drawn from a von Mises distribution with zero mean direction and concentration $\kappa=5$ (for model  {with regression function r1}) and $\kappa=15$ (for model  {with regression function r2}).  It can be seen that the errors in the top row, corresponding to $\kappa=5$,  present more variability than the ones generated with $\kappa=15$. Right panels show the values of the response variables, obtained {adding}   {the mean values in the left panels} and  {the} circular errors  {in the central panels}.

	{For each simulated sample, the Nadaraya--Watson-type estimator  {(\ref{C_est})} is computed using the {kernel $K(u)=1-u^2$, $u \in (0,1)$,} and taking the $L^2$ metric to calculate the distance between curves. Regarding the bandwidth $h$, it is selected by cross-validation, choosing the bandwidth parameter ${h}_{{\rm CV}}$ that minimizes 
		the function:
		\begin{equation}\label{cv}{\rm CV}(h)=\sum_{i=1}^n \left\{  1-\cos\left[\Theta_i-\hat{m}_{h}^{(i)}(\mathcal{X}_i)\right]\right\},\end{equation}
		where $\hat{m}_{h}^{(i)}$  {denotes} the Nadaraya--Watson-type estimator computed using all observations except for $(\mathcal{X}_i,\Theta_i)$.}

	{To evaluate the performance of the Nadaraya--Wat\-son-type estimator $\hat{m}_h$, the circular average squared error (CASE), defined by \cite{kim2017multivariate}:
		\begin{eqnarray}\label{CASE}
			{\rm CASE}[\hat{m}_{h}(\chi)]=\frac{1}{n}\sum_{i=1}^n \left\{1- \cos\left[m(\mathcal{X}_i) - \hat{m}_{h}(\mathcal{X}_i)\right]\right\},\end{eqnarray}
		is computed.}
	Table \ref{table} shows the average (over  500 replicates) of the CASE
	when $h$ is selected by cross-validation. For comparative purposes, {in each scenario, the average of the minimum value of the CASE, computed using the bandwidth $h_{{\rm CASE}}$ that minimizes (\ref{CASE}), is also included in Table \ref{table}. Note that the bandwidth $h_{{\rm CASE}}$ can not be computed in a practical situation, because the regression function is unknown.}  {The corresponding CASE errors are included in this table as a benchmark}. 
	In the different scenarios, it can be seen that the average errors decrease when the sample size {increases}. In addition, as expected, results are  better when the error concentration gets larger.

	The appropriate performance of the nonparametric circular regression estimator \eqref{C_est} is also observed in Figure \ref{figure:est}. In this {plot}, {the same scenarios as those considered to obtain Figure \ref{figure:process} are used. In the left,} the theoretical regression functions  {r1 and r2} {are show and, in the right,} the corresponding estimates {using} a random sample {generated from these models are presented.} The re\-pre\-sen\-ta\-tions in the top row correspond to the data simulated {using the model with regression function}  {r1} and those in the bottom row {with regression function}  {r2}.
	
	\begin{figure*}[htb]
		\centering
		\includegraphics[width=0.75\textwidth]{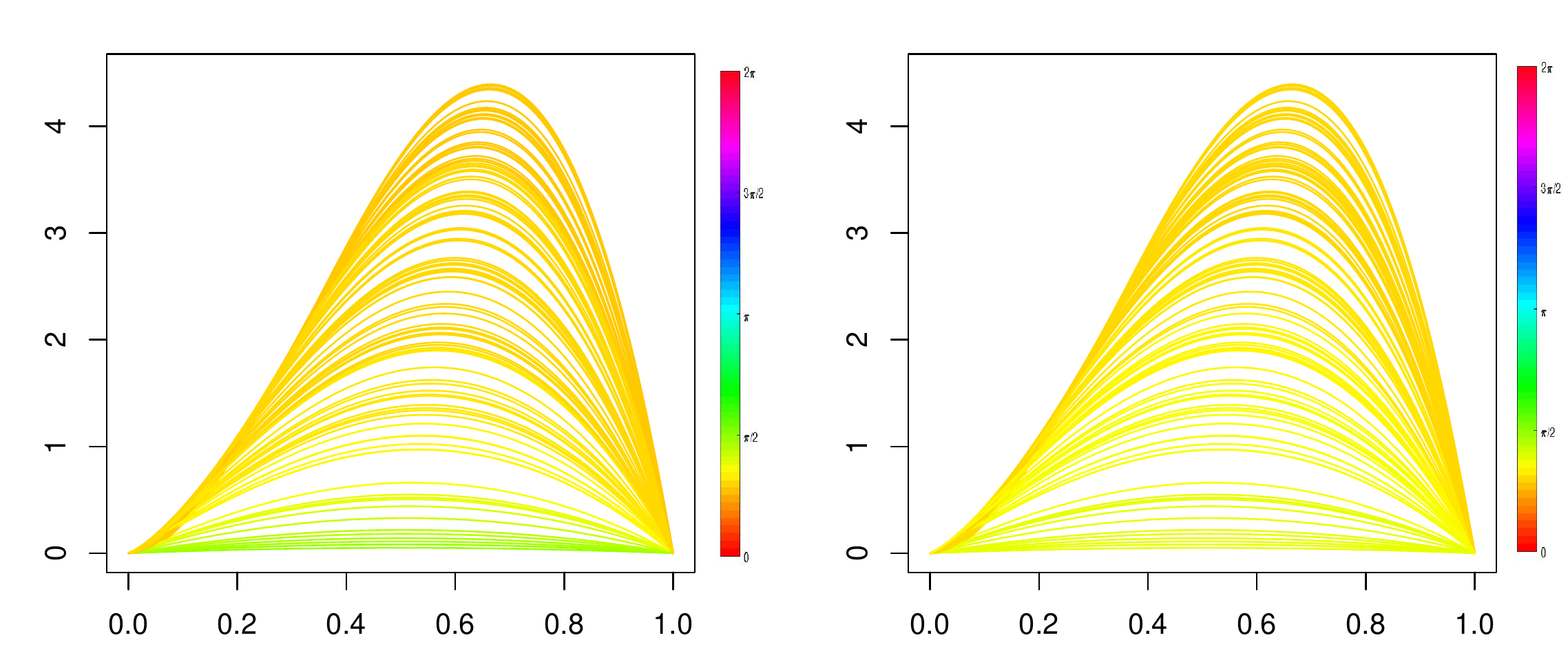}
		\includegraphics[width=0.75\textwidth]{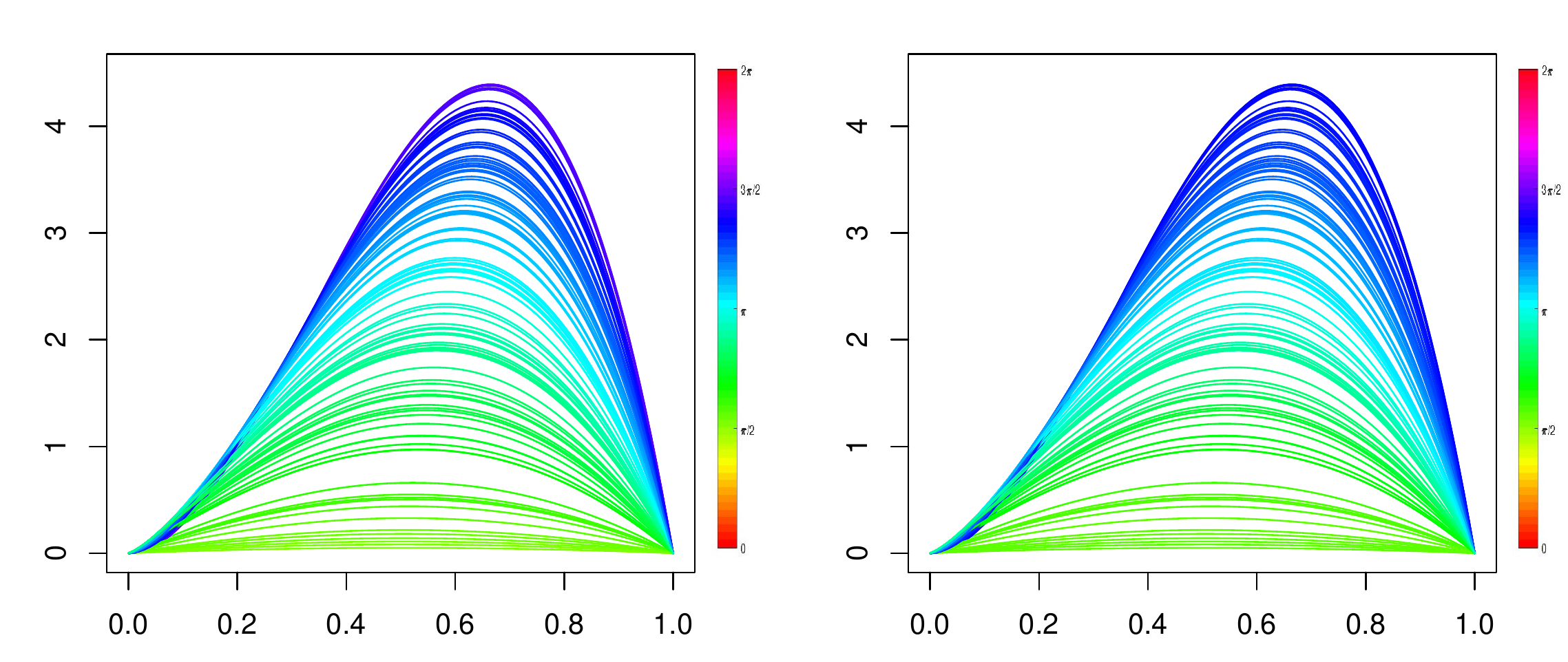}
		\caption{Theoretical regression function (left), jointly with the corresponding estimates (right), using the specific scenarios considered in Figure \ref{figure:process}, for the model with regression function r1 (top row) and r2 (bottom row).}
		\label{figure:est}
	\end{figure*}
	
	{To complete the study, we repeated the simulations using a \textit{k}-nearest neighbor (\textit{k}NN) version of our estimator \citep[see][for a study of the \textit{k}NN method in a regression model with scalar response a functional covariate]{Burba2009}. The results obtained were very similar to those achieved by the Nadaraya--Watson approach, with a slightly better performance of the Nadaraya--Watson-type estimator. For reasons of space, only the results with r1 using the \textit{k}NN-type estimator are presented (Table \ref{table2}). For comparison, we also include in this table the part of Table \ref{table} relative to r1.}
	
	\begin{table}[htb]
		\begin{center}
			\caption{{Average  CASE given in Equation (19), over 500 replicates, using the \textit{k}NN-type estimator (\textit{k}NN). Data are generated from the  model with regression function r1 and von Mises errors with different concentration parameters ($\kappa=5, 10, 15$). For comparison, results obtained with the Nadaraya--Watson-type estimator in this scenario, presented in Table \ref{table}, are also included.}}
			\begin{tabular}{cccccccc}
				$\kappa$&$n$&&$h_{{\rm CV}}$&&$h_{{\rm CASE}}$&&{\textit{k}NN}\\
				\hline
				5 & 50 &  & 0.0096 &  &0.0037 && 0.0472\\ 
				& 100 &  & 0.0046 &  & 0.0021 &&0.0152\\ 
				& 200 &  & 0.0011 &  & 0.0005 &&0.0070\\ 
				& 400 &  &  0.0004&  &  0.0002&&0.0036\\ 
				\hline10 & 50 &  & 0.0020 &  & 0.0018&&0.0216\\ 
				& 100 &  & 0.0010 &  & 0.0007 &&0.0073\\ 
				& 200 &  &0.0005  &  &  0.0003 && 0.0035\\ 
				& 400 &  & 0.0004 &  &  9e-05 &&0.0018\\ 
				\hline15 & 50 && 0.0019    &  & 0.0017 &&0.0143 \\ 
				& 100 &  &0.0007  &  &  0.0005 &&0.0048\\ 
				& 200 &  & 0.0004 &  & 0.0003 && 0.0024\\ 
				& 400 &  & 0.0002 &  &  8e-05
				&&0.0013\\ 
				\hline
			\end{tabular}
			\label{table2}
		\end{center}
	\end{table}

	\section{Real data illustration}
	\label{sec:example}
	{There is a quite extensive literature on the computation of daily temperature averages or profiles, which are useful to summarize temperature patterns \citep[see][]{Huld2006, Ma2013, Bernhardt2018}. More recently, there has been some interest in investigating the identification of daily temperature extremes \citep[see][]{Glynis2021} or the association of diurnal temperature range with mortality burden \citep[see][]{Cai2021}. All these approaches are possible given the high frequency registries of temperatures that are available nowadays, claiming for a proper statistical analysis using functional data methods (see \citealp{Ruiz2012}, for a spatial autoregressive functional approach, and \citealp{Aguilera2017}, for spatial functional prediction). These works, taylored for a spatial analysis, do not consider the calendar time as a response.}
	
	{Daily temperature curves (obtained by data records every 10 minutes, so each curve has 144 points) have been collected by Meteogalicia\footnote{Meteogalicia: https://www.meteogalicia.gal} in Santiago de Compostela (NW-Spain) from February 15, 2002 until December 31, 2019. The northern region of Spain presents an oceanic climate, characterized by cool winters and warm summers, with mild temperatures in {spring and fall} seasons. For the initial period from February 15, 2002 to June 28, 2005{, using the available sample $(\mathcal{X}_i,\Theta_i)$, $i=1, \ldots, n$, where $\mathcal{X}_i$ denotes the whole daily temperature curve and $\Theta_i$ the corresponding day on a year basis, for the day $i$ in that period} (see {Figure~\ref{figure:rd}} in the Introduction), the regression estimator proposed in (\ref{C_est}) is {computed}. The fitted model is obtained {using the same kernel {and metric} as in the simulations} and considering the {cross-validation} bandwidth selector {obtained minimizing (\ref{cv})}, which gives $h_{{\rm CV}}=8.8139$. As it has been explained {in the Introduction}, relevant changes in temperature patterns could be revealed by comparing 
		{the predicted days with the fitted model for some temperature curves, in a different period, with the observed days corresponding to those temperature curves.} {Note that the period selected for estimating the model contains enough data to proceed with our nonparametric method, and it is far enough in time to mitigate temporal correlation from the prediction experiment carried out in what follows.}
	}
	
	As a specific example, consider the period from May 21 to May 27, 2019.
	For this {spring} week, the {7} observed curves can be seen in Figure~\ref{figure:estrd} ({left}), with colors indicating the observed days. In Figure~\ref{figure:estrd} (right), same {7} observed curves are now colored according to the predicted days where these curves should be observed with the model fitted from the 2002-2005 data in Figure~\ref{figure:rd} (left). According to the model for this early period, the observed curves in the considered week in May 2019 correspond to days in June, July and August. So, this week in May 2019 would be perceived as warmer with respect to the reference period.
	
	\begin{figure*}[htb]
		\centering
		\includegraphics[width=1\textwidth]{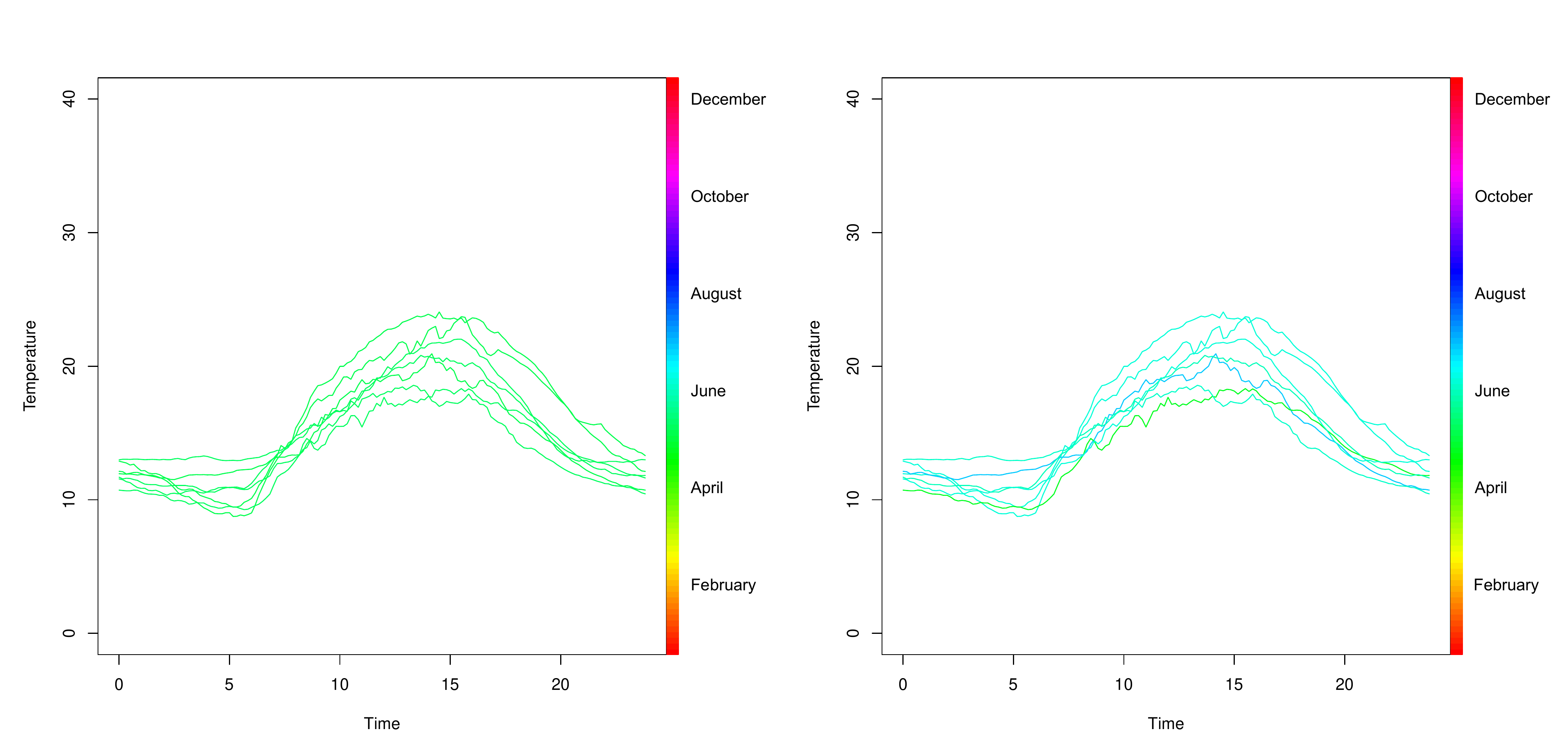}
		\caption{Daily temperature curves in Santiago de Compostela (Spain) from May 21, 2019 to May 27, 2019 (left). Colors indicate the observed day. Temperature curves colored by predicted days using the fitted model (right).}
		\label{figure:estrd}
	\end{figure*}

	{For a more general analysis, consider the temperature curves registered along 2019 and take a division by months. {We denote} the observed sample {by} $(\mathcal{X}^b_j,\Theta^b_j)$, for $j=1,\ldots,n_b$ {and $b=1,\ldots,12$}, being $n_b$ the number of days of month $b$. Similarly to the measurement error considered for evaluating the performance of the estimator in (\ref{CASE}), a circular average prediction error (CAPE) is defined for each month as:}
	\begin{equation}
		\mbox{CAPE}_b=\frac{1}{n_b}\sum_{j=1}^{n_b}\left\{1-\cos\left[{\Theta^b_j}-\hat m_h({\mathcal{X}^b_j})\right]\right\},
		\label{CAPE}
	\end{equation}
	{where $\hat m_h(\mathcal{X}^b_j)$, for $j=1, \ldots, n_b$ and $b=1,\ldots, 12$, is the predicted day in the year 2019 with the fitted model using {the} 2002-2005 data. Note that this prediction error is given in an easily interpretable scale since a perfect match between observed and predicted days for a sample of temperature curves yields to a null value; if predictions are located 3-months later/before than observations, then the prediction error takes value 1 (so a month time lag gives one third); if predictions and observations are 6-months apart, then the prediction error takes value 2. Boxplots for monthly prediction  errors for 2019 are depicted in Figure~\ref{figure:CAPE}.}  
	
	\begin{figure*}[htb]
		\centering
		\includegraphics[width=1\textwidth]{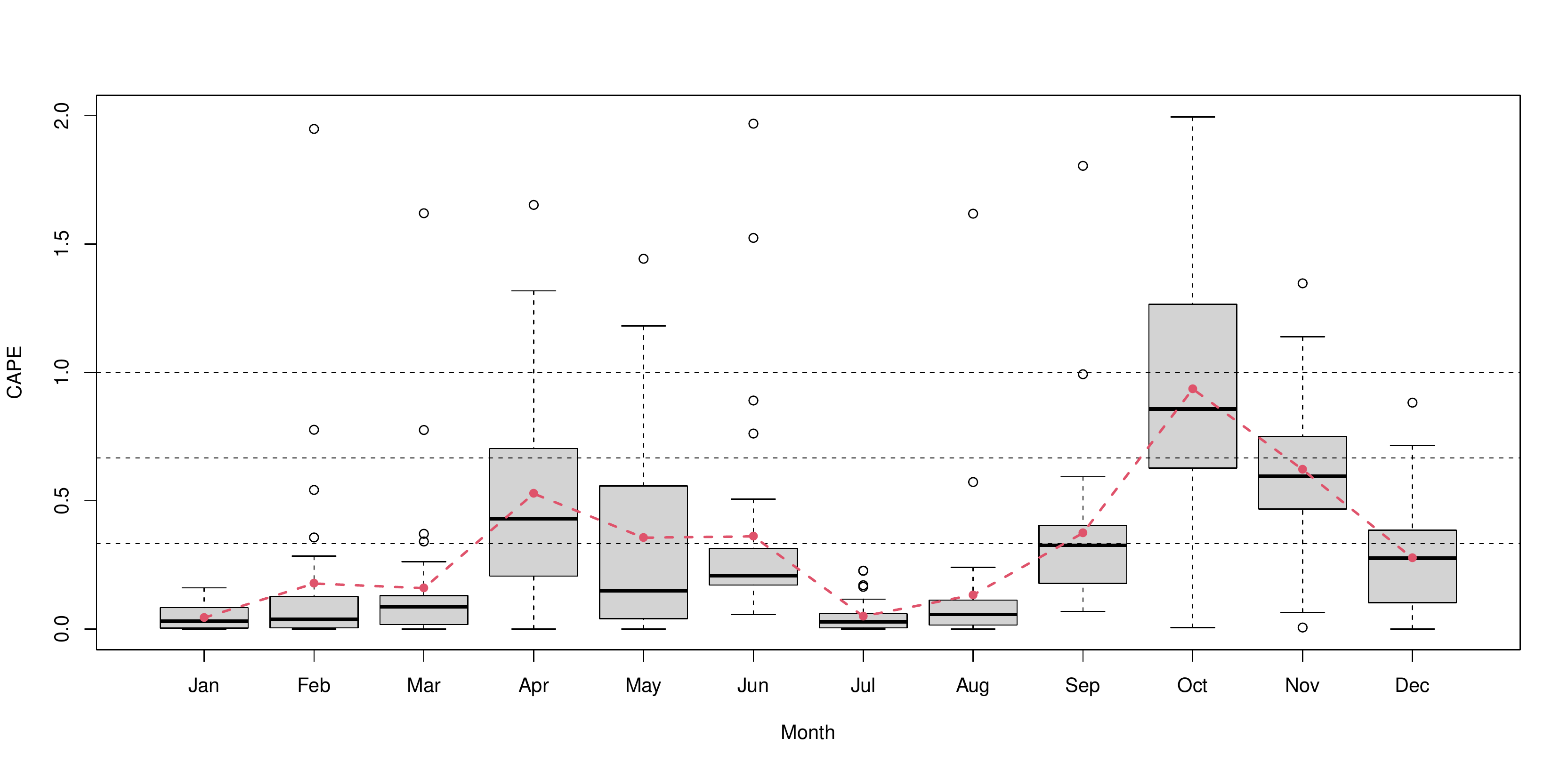}
		\caption{Boxplots for the circular prediction errors, by months, for 2019. Red dots: circular average prediction error for each month.}
		\label{figure:CAPE}
	\end{figure*}

	\begin{figure*}[h!]
	\centering
	\includegraphics[width=1\textwidth]{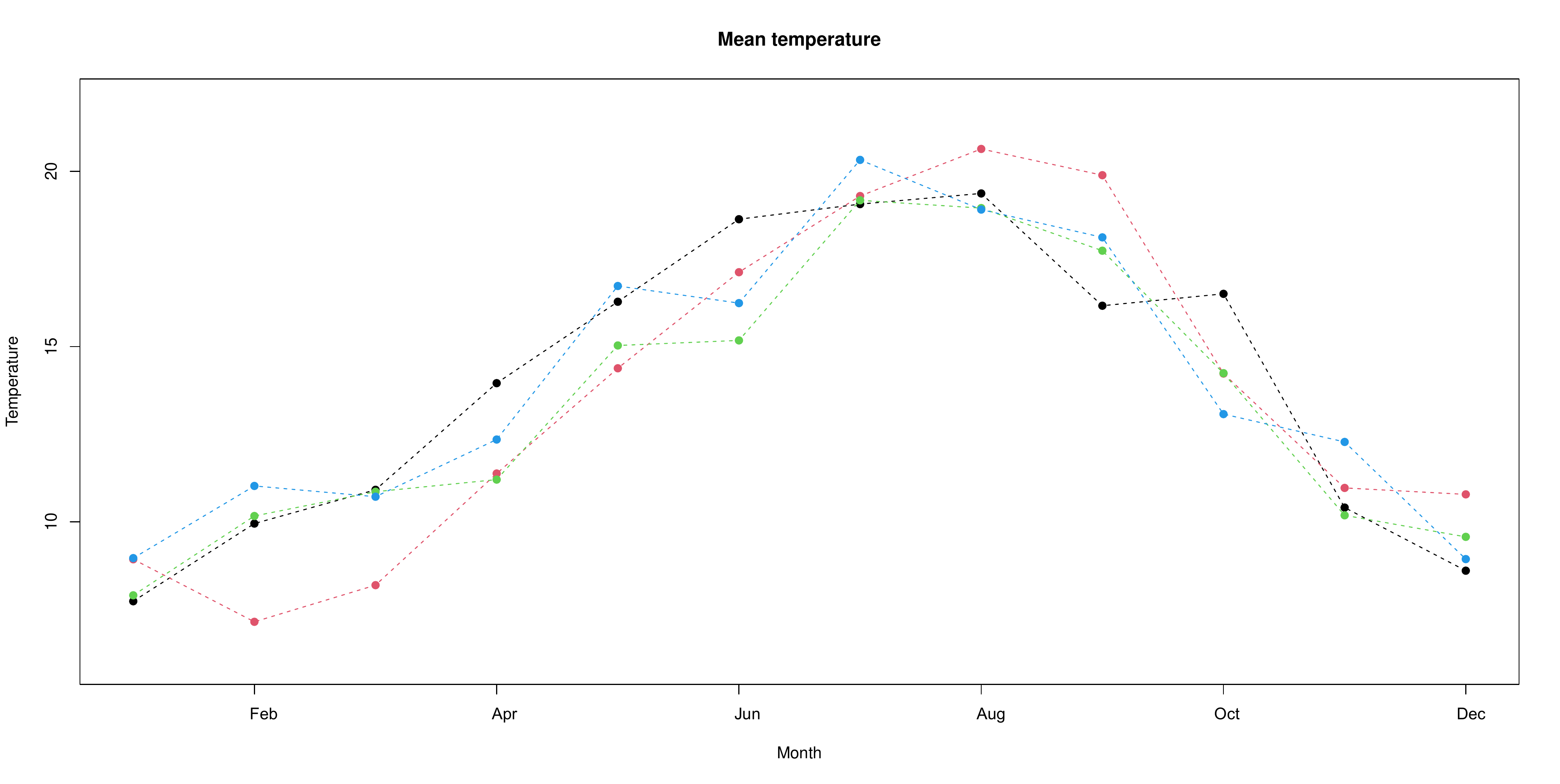}
	\caption{Mean temperature in Santiago de Compostela (Spain) in 2017 (black), 2018 (red), 2019 (green) and 2020 (blue) by months. }
	\label{figure:mean_temp}
\end{figure*}
	
	{Note that there are seven months (April, May, June, September, October, November and December) that present prediction errors larger than 1/3, indicating that the mismatch between the observed day for a certain curve and the predicted day (with the model fitted for the early period) is larger than one month. 
		{On the other hand, it is observed in Figure \ref{figure:CAPE} that January/February  and July/August in 2019 behave, more or less, as usual Januaries/Februaries  or Julies/Augusts in the years considered in the training sample, perhaps with only a small ``shift" in days for the temperature curves of these months. The experts from the meteorological service consulted by us indicated that this is the expected behavior for these months in a context of global warming. In the Atlantic climate (the one corresponding to Galicia, where the data were recorded), one of the manifestations of climate change is the disappearance of seasons, specifically, spring and fall. January/February and July/August are still the coolest and warmest months, respectively, in the region considered in this research (see Figure \ref{figure:mean_temp} for the years 2017-2020) and they would be the  months that would distinguish the ``periods" of the year. } To verify the possible slight displacement in days corresponding to the temperature curves for January and July pointed out by the experts, we compared the days observed and the days predicted by our model considering the temperature curves for January and July 2019, as it was performed in Figure \ref{figure:estrd} (see Figures \ref{figure:jan_tem} and \ref{figure:jul_tem}). In these two figures, the effect indicated by the experts seems to be observed. A similar behavior was observed for the temperature curves of February  and August.

	\begin{figure*}[htb]
		\centering
		\includegraphics[width=1\textwidth]{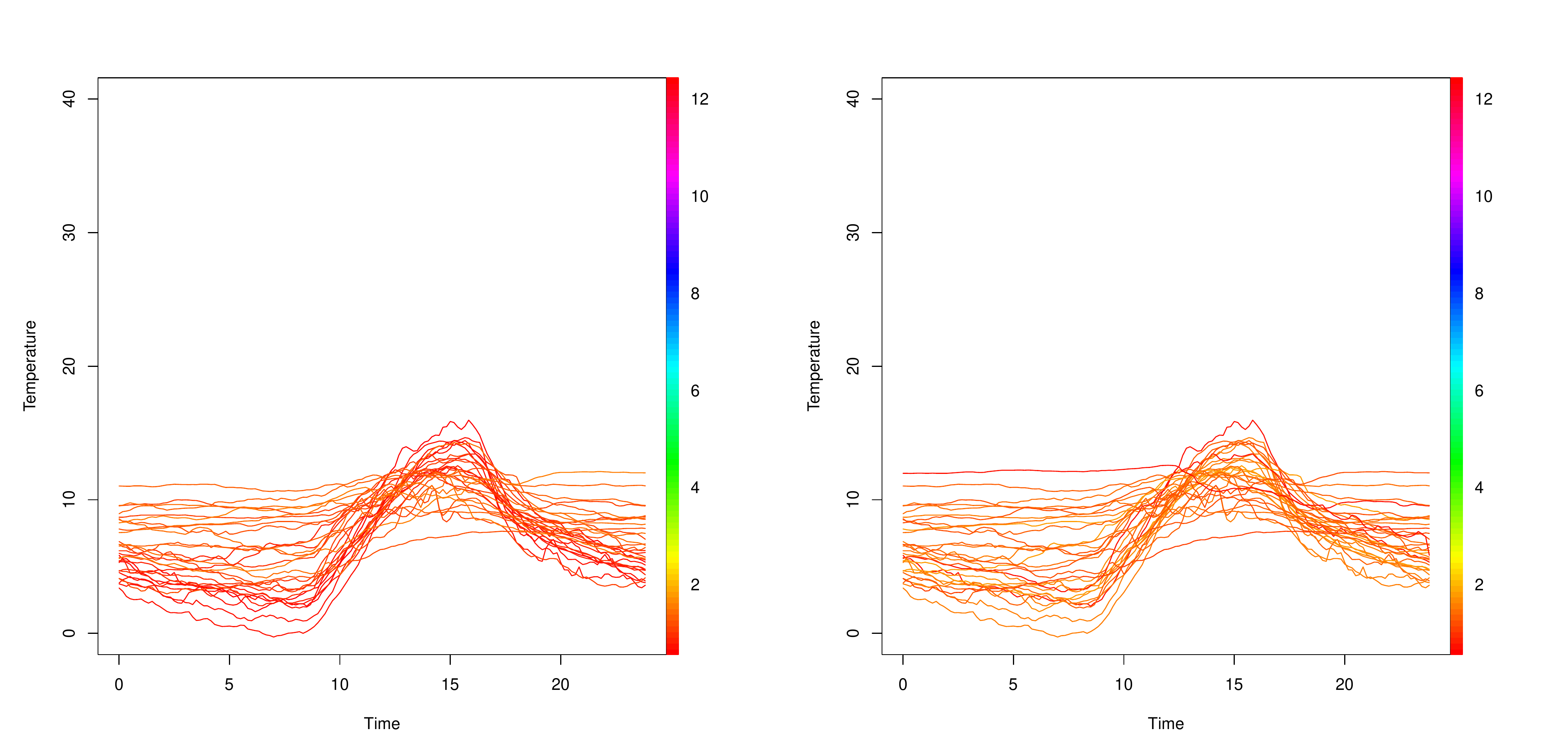}
		\caption{Daily temperature curves in Santiago de Compostela (Spain) from January 1, 2019 to January 31, 2019 (left). Colors indicate the observed day. Temperature curves colored by predicted days using the fitted model (right).}
		\label{figure:jan_tem}
	\end{figure*}
	
	\begin{figure*}[h!]
		\centering
		\includegraphics[width=1\textwidth]{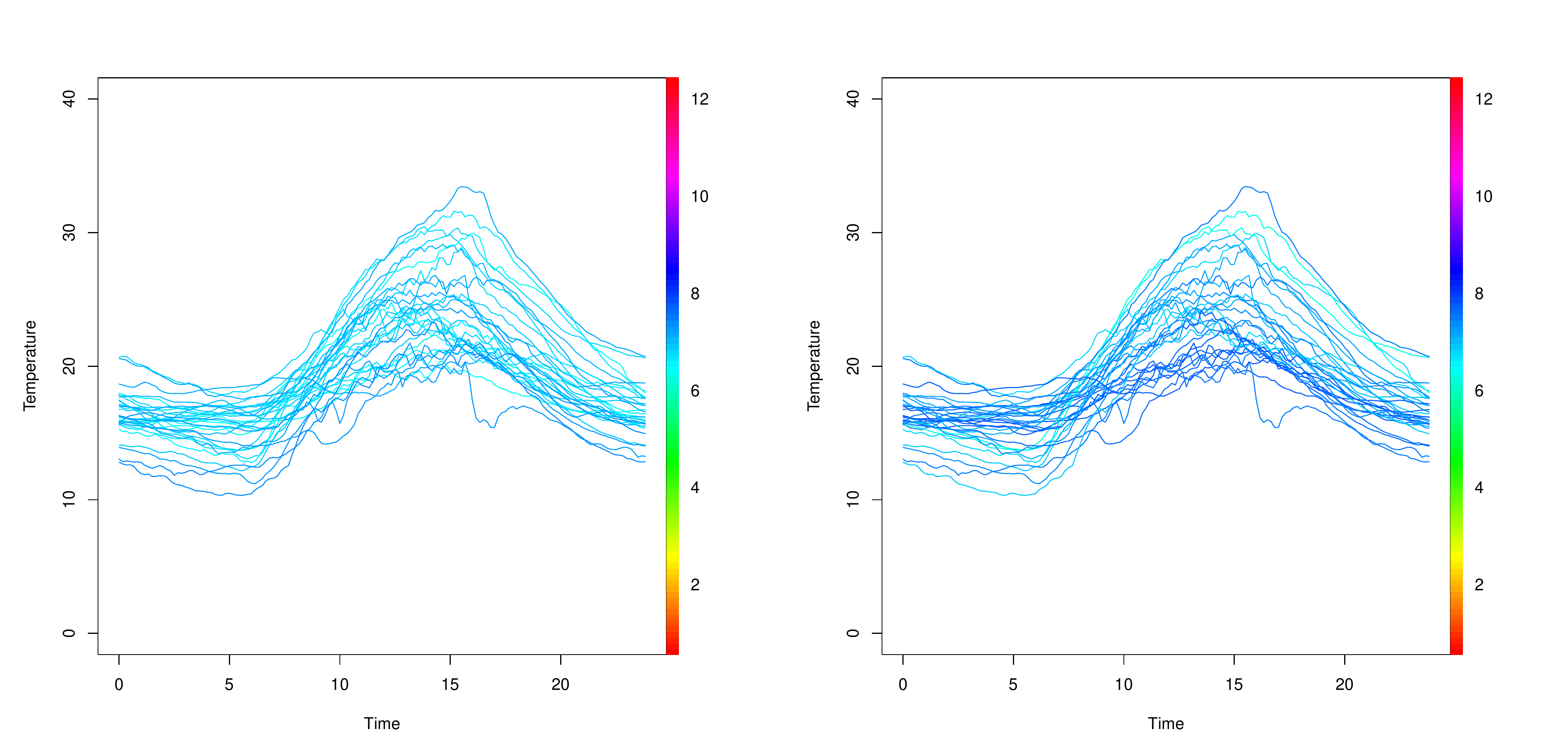}
		\caption{Daily temperature curves in Santiago de Compostela (Spain) from July 1, 2019 to July 31, 2019 (left). Colors indicate the observed day. Temperature curves colored by predicted days using the fitted model (right).}
		\label{figure:jul_tem}
	\end{figure*}

		{Moreover, to analyze more in deep the direction of the displacements observed in Figure \ref{figure:CAPE}, circular boxplots \citep{buttarazzi2018boxplot} of the differences between the observed days and those predicted by our nonparametric estimator are computed. Figure \ref{fig:boxplots_circ} shows these circular bloxplots, for all the months, that is, the circular boxplots of ${\Theta^b_j}-\hat m_h({\mathcal{X}^b_j})$, for $j=1,\ldots, n_b$ and $b=1,\ldots,12$. Note the large variability in predictions for April, May, June and October, as well as the small variability but with a clear shift (prediction errors are not centered at zero) for September (warmer than expected according to the fitted model), November and December (cooler than expected). Additionally, to rule out that year 2019 was a singular year, we replicated this experiment for years 2017, 2018 and 2020, showing a very similar pattern in all cases (see Figure \ref{figure:boxplot_all}).}

		\begin{figure*}[htb]
			\centering
			\includegraphics[scale=0.19]{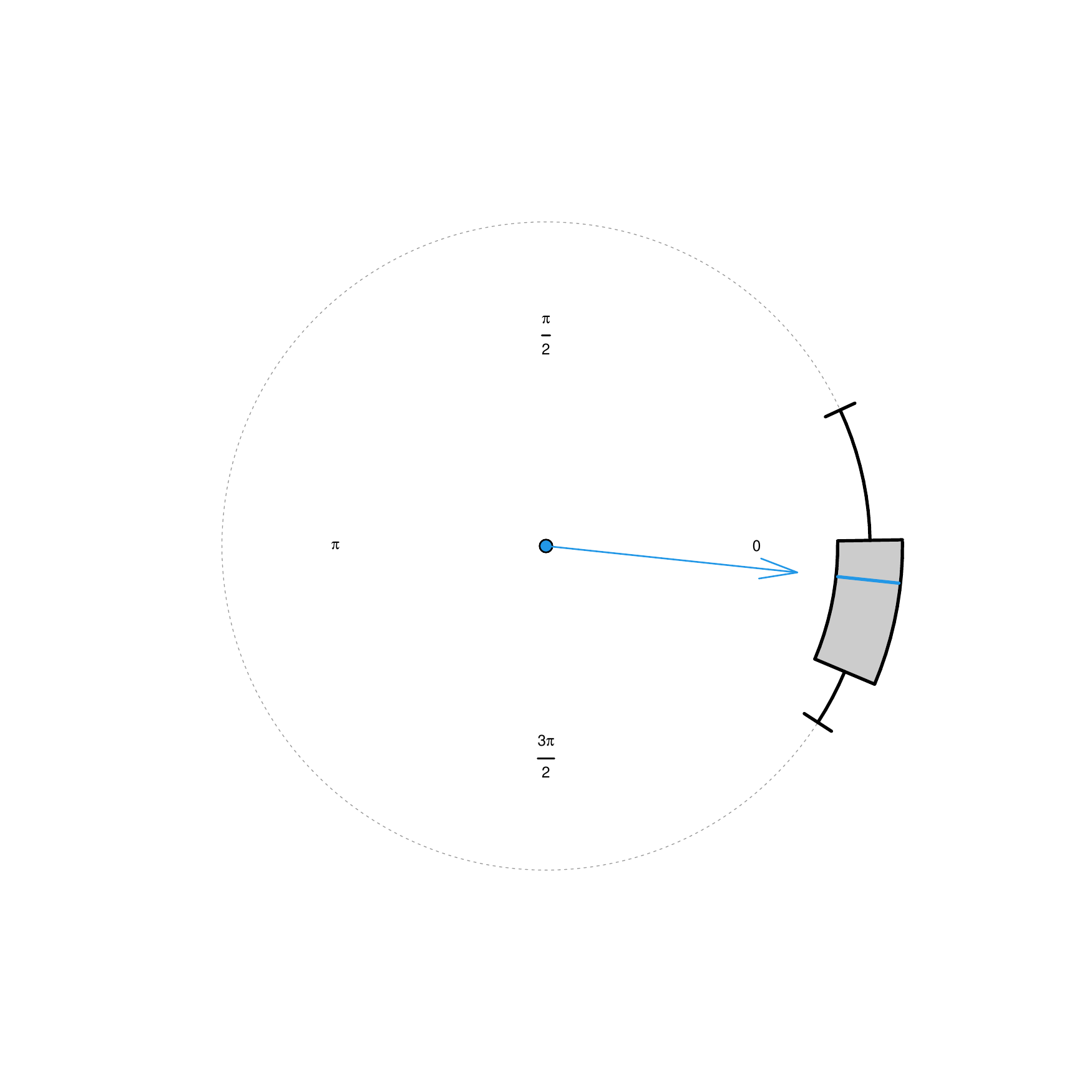}
			\includegraphics[scale=0.19]{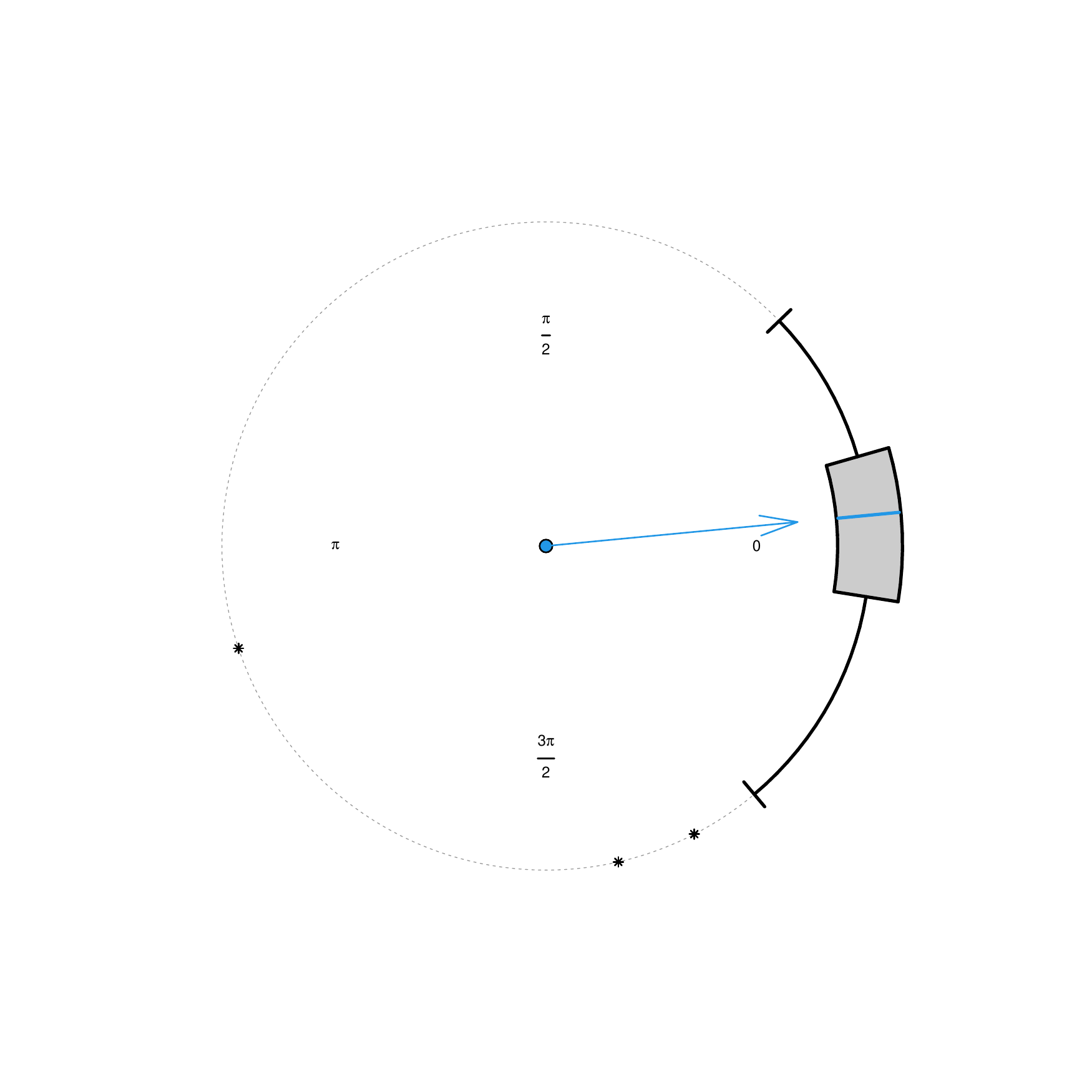}
			\includegraphics[scale=0.19]{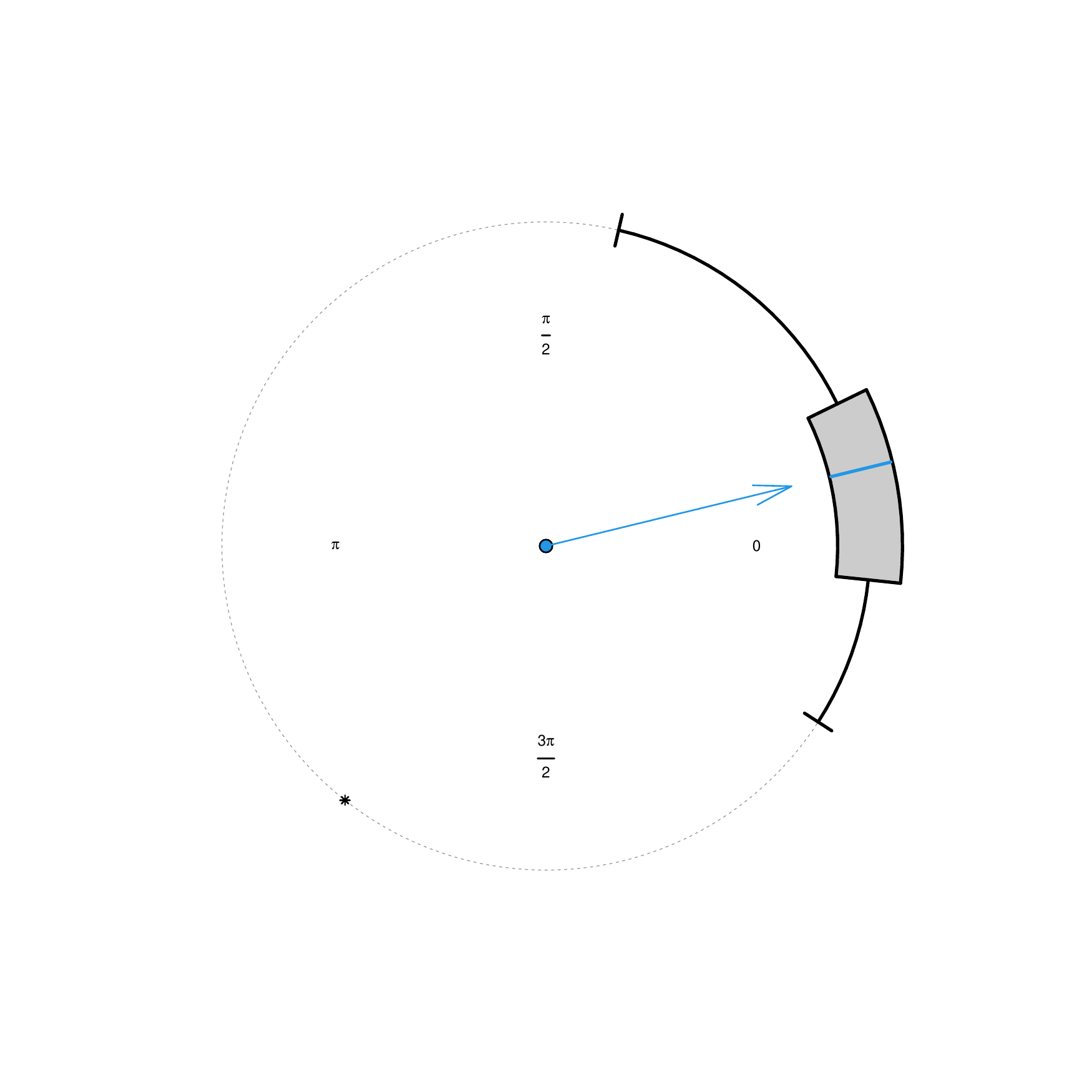}
			\includegraphics[scale=0.19]{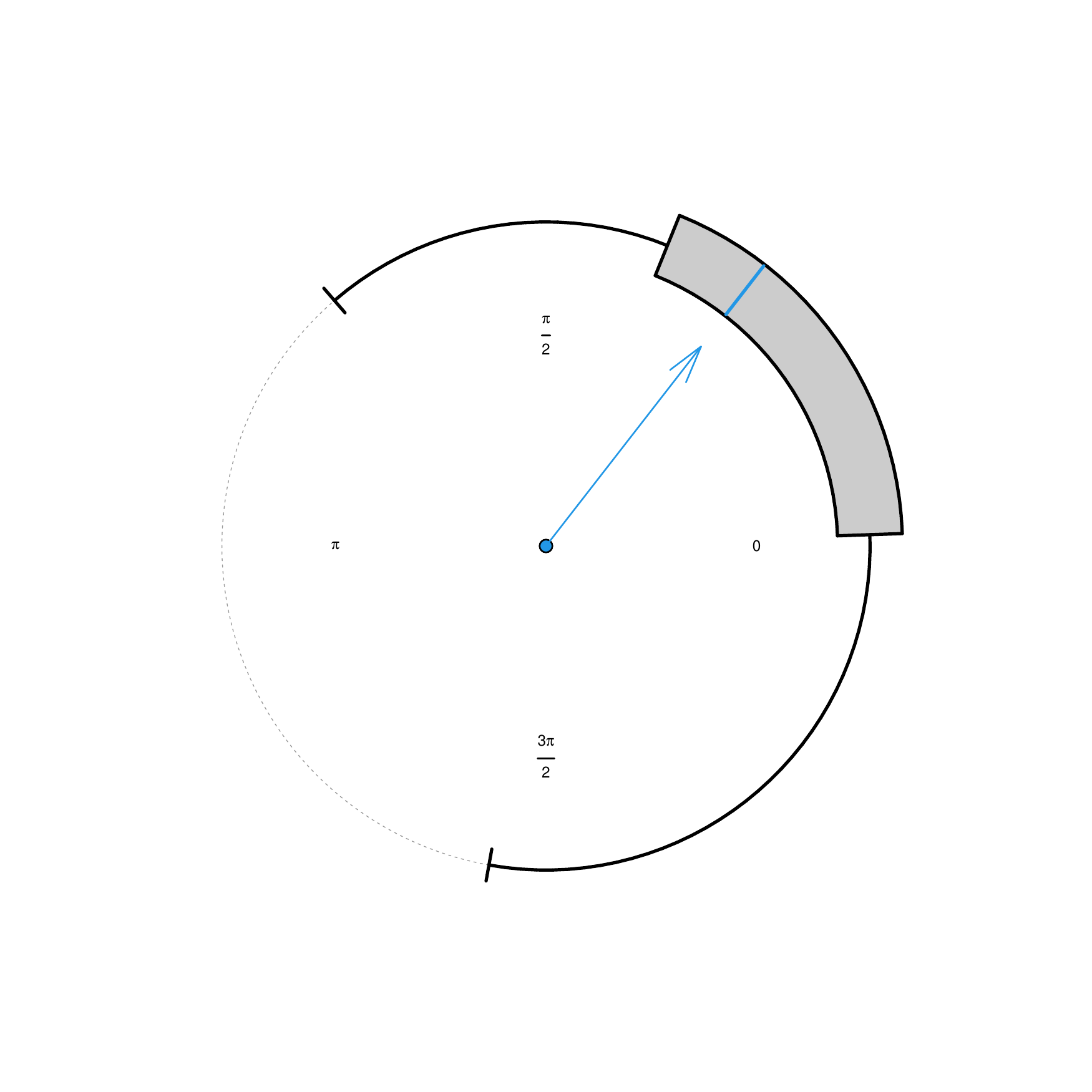}
			\includegraphics[scale=0.19]{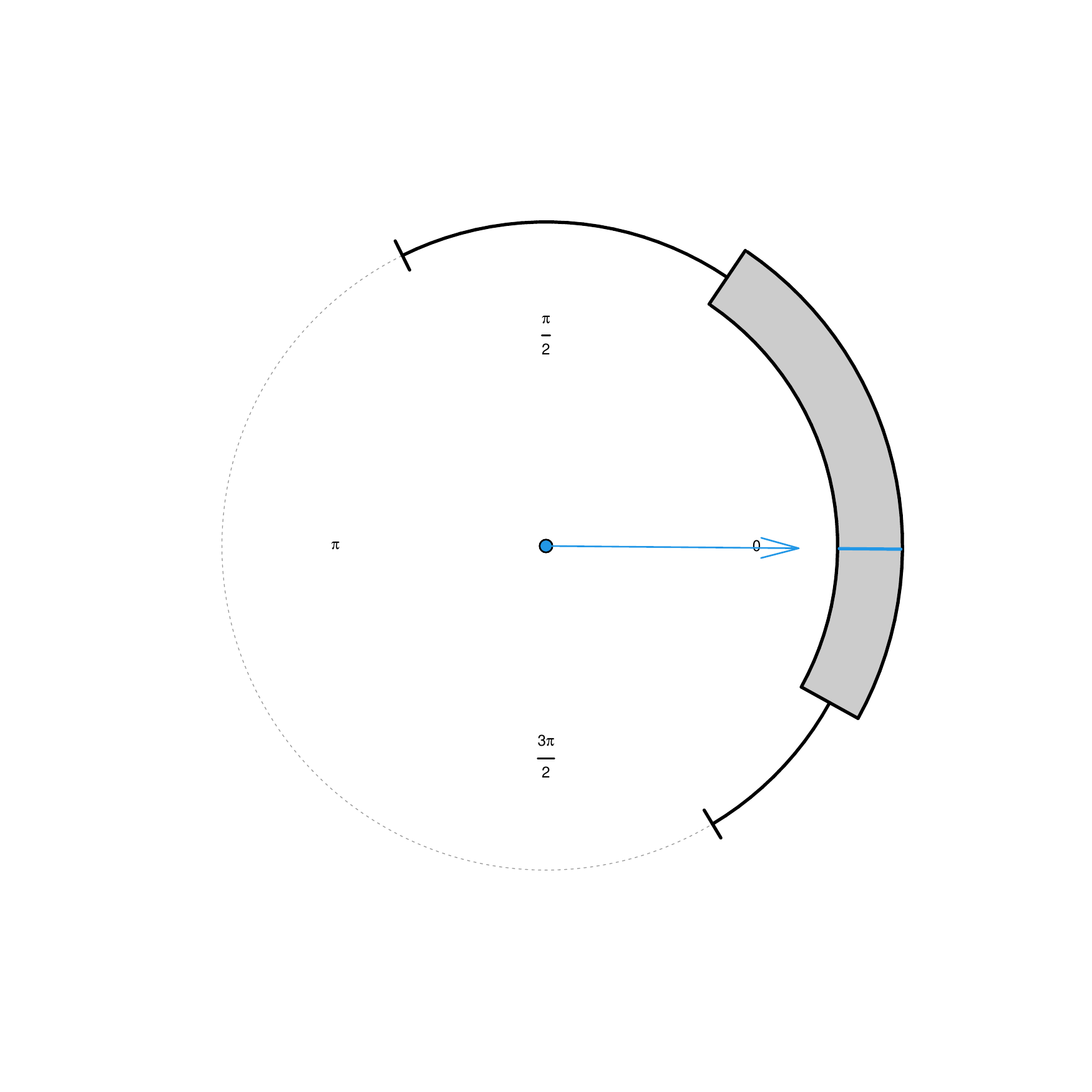}
			\includegraphics[scale=0.19]{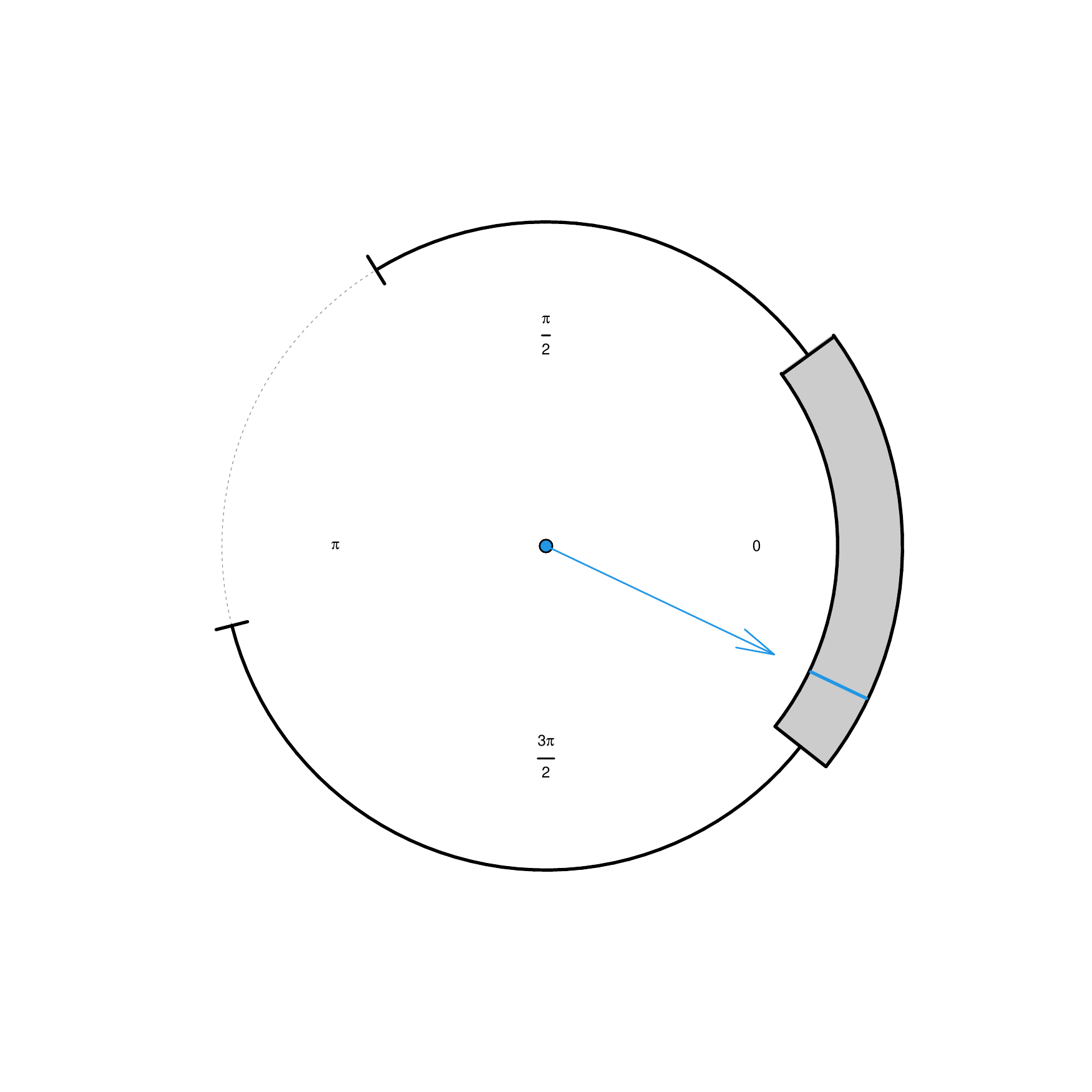}
			\includegraphics[scale=0.19]{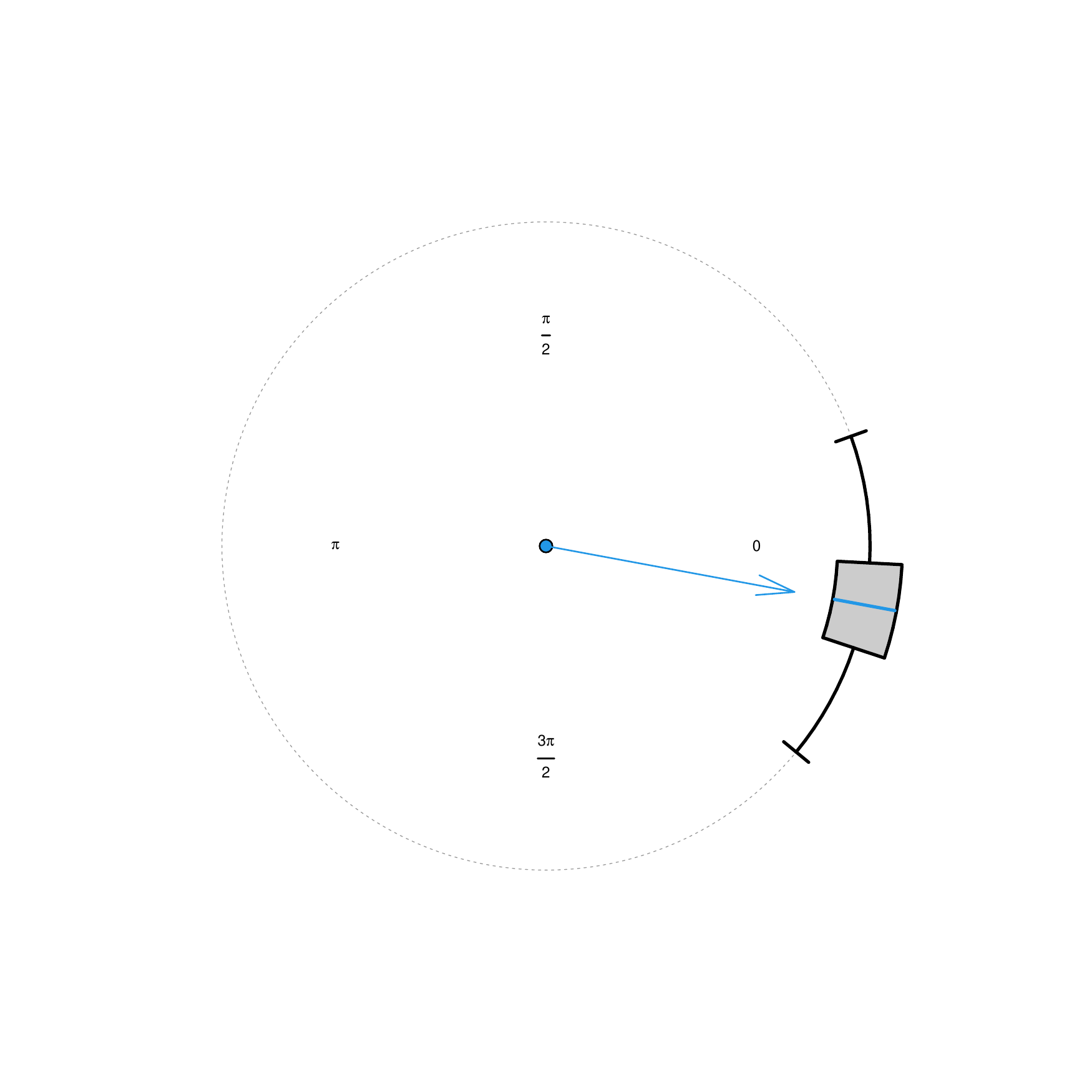}
			\includegraphics[scale=0.19]{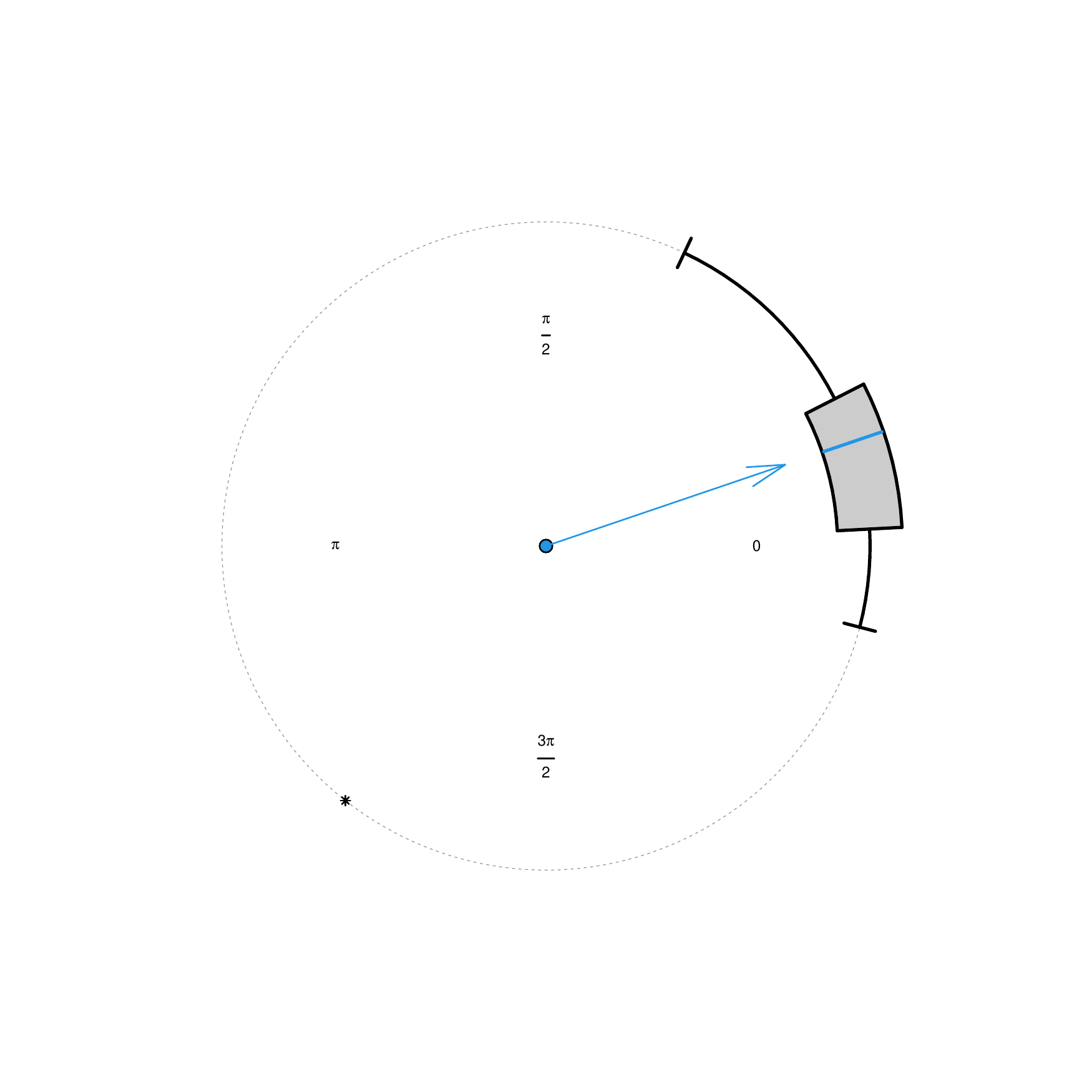}
			\includegraphics[scale=0.19]{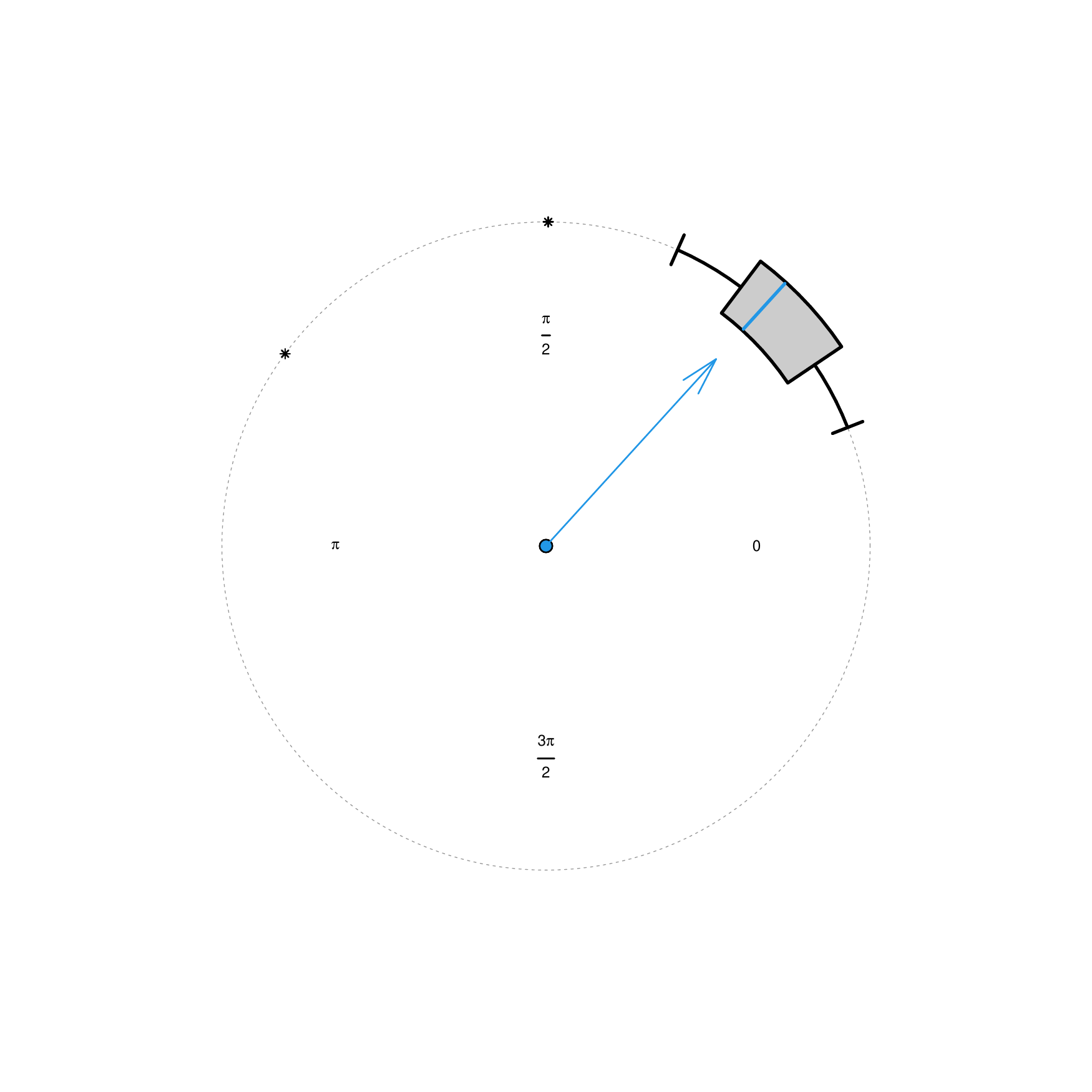}
			\includegraphics[scale=0.19]{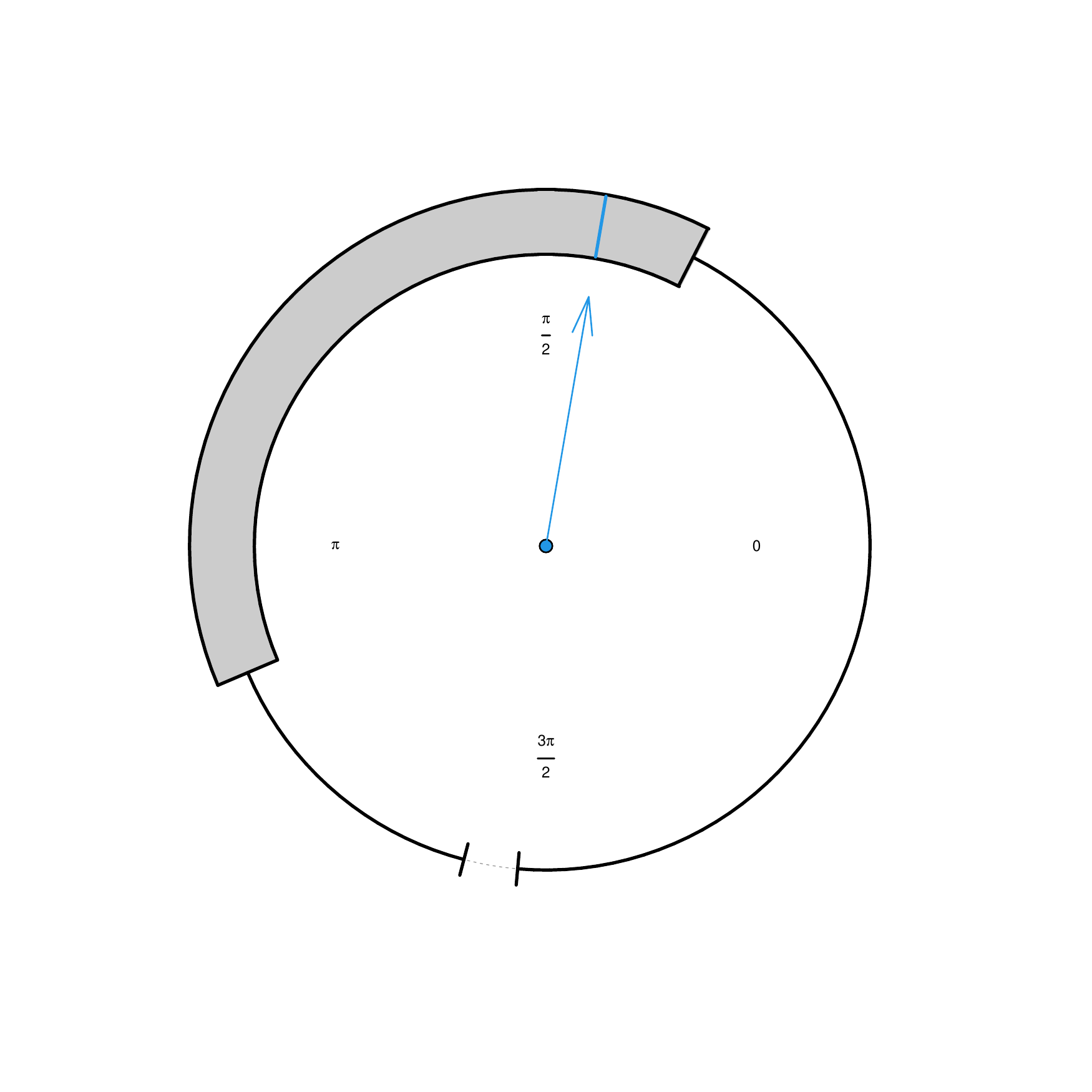}
			\includegraphics[scale=0.19]{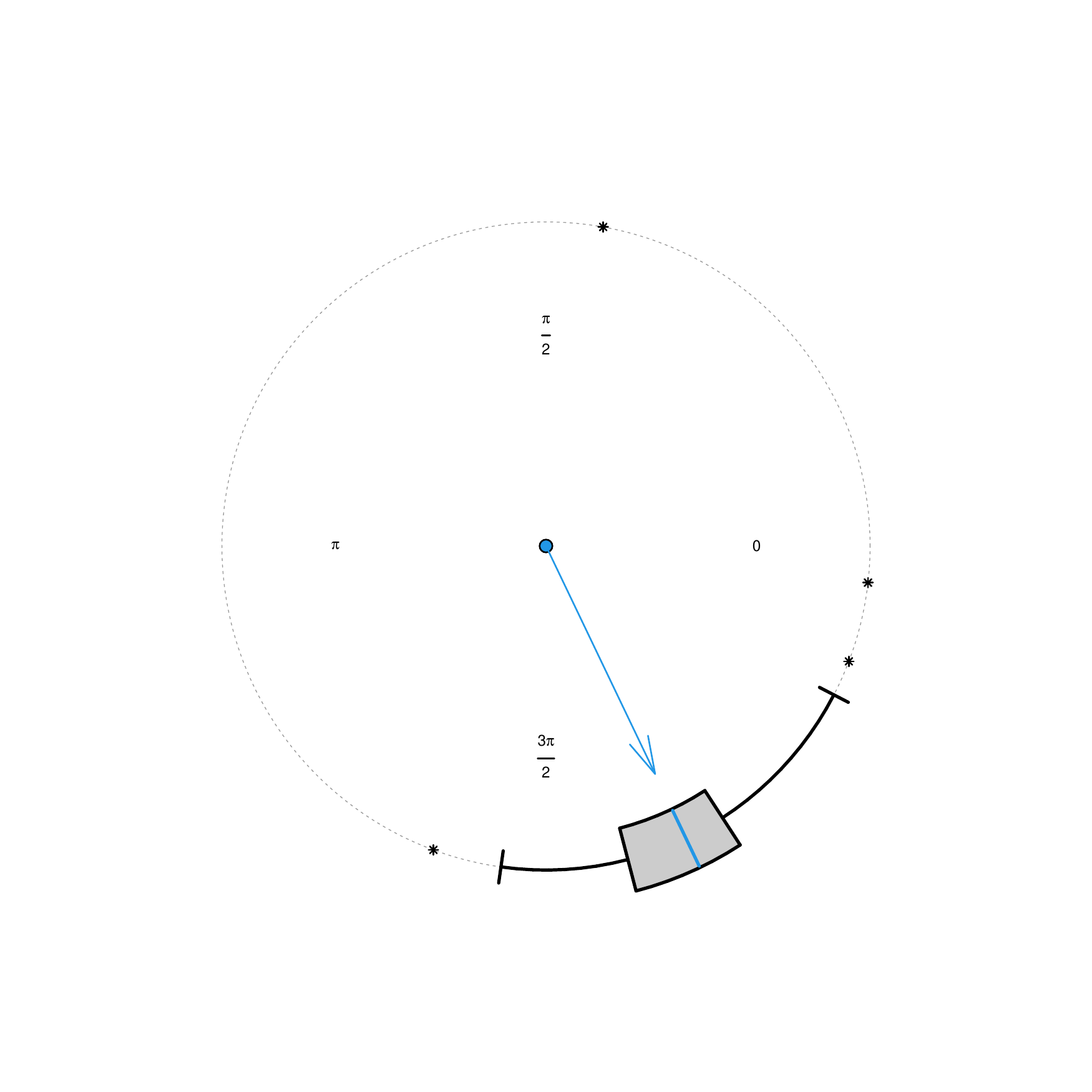}
			\includegraphics[scale=0.19]{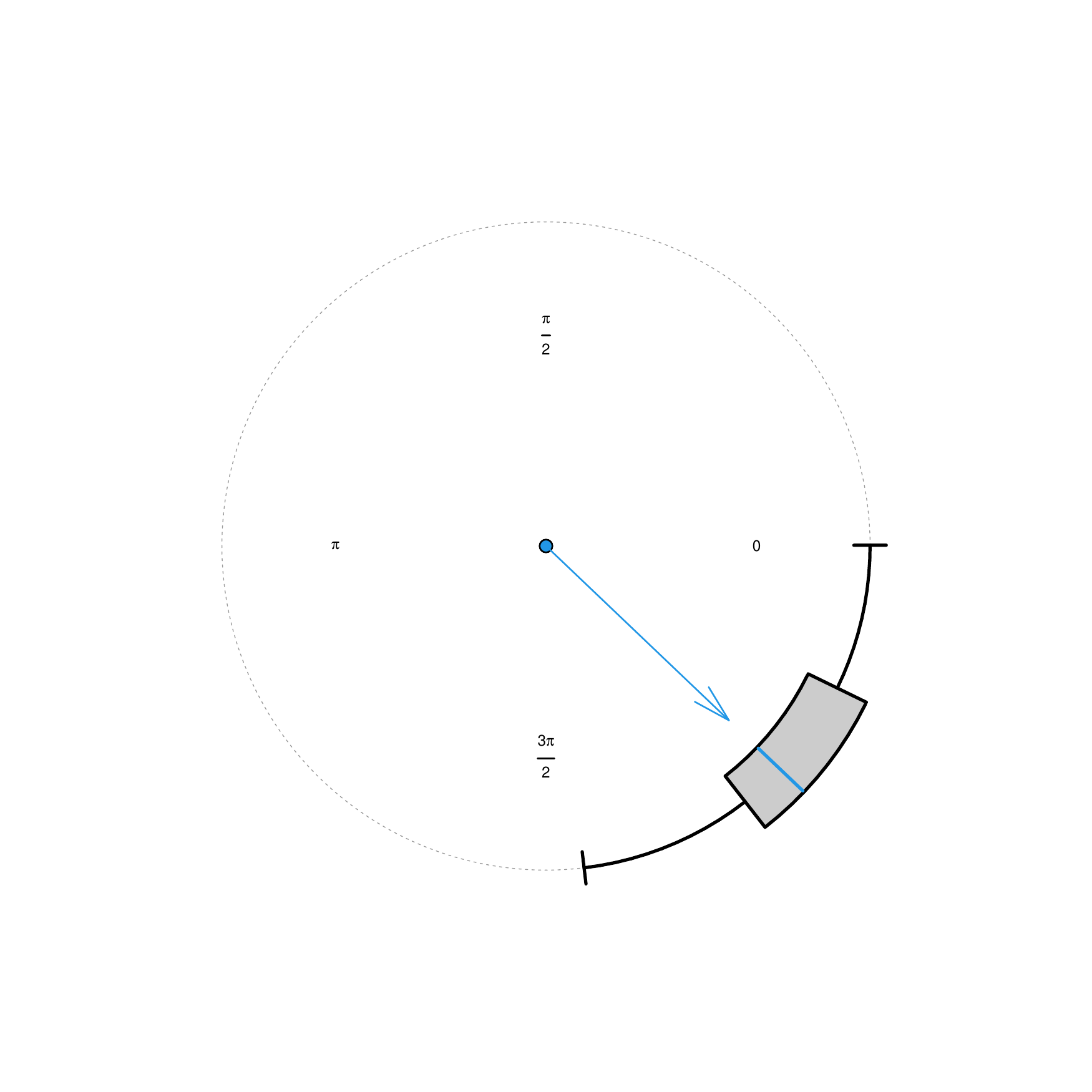}
			\caption{Circular boxplots for the circular prediction errors, by months, for 2019. The months are in order from left to right and from top to bottom.}
			\label{fig:boxplots_circ}
		\end{figure*}
	
	\begin{figure*}[h!]
		\centering
		\includegraphics[width=1\textwidth]{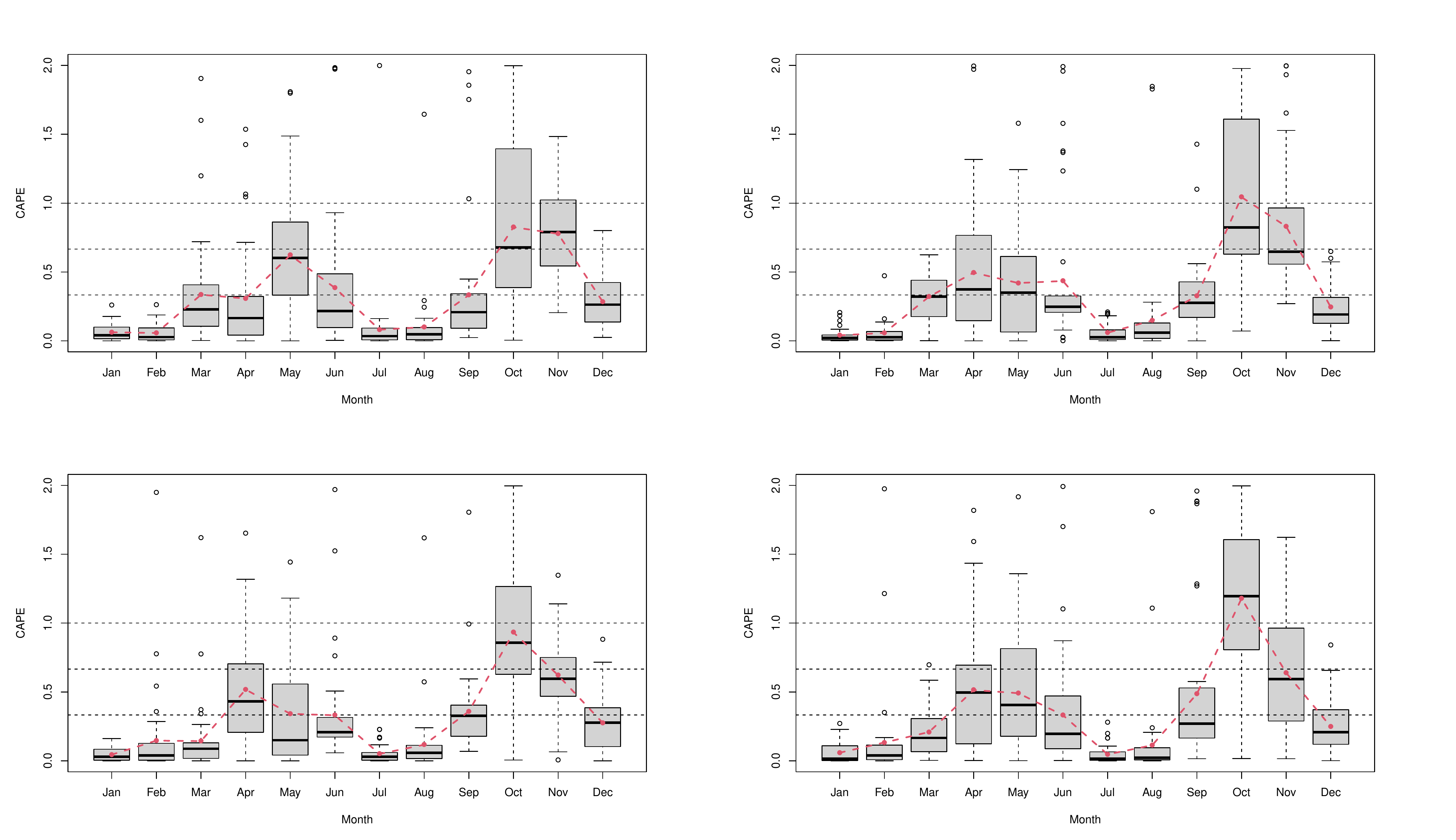}
		\caption{Boxplots for the circular prediction errors, by months, for 2017 (top left), 2018 (top right), 2019 (bottom left) and 2020 (bottom right). Red dots: circular average prediction error for each month. }
		\label{figure:boxplot_all}
	\end{figure*}
		
		{We also repeated this study using the \textit{k}NN-type estimator compared in the simulations. The results and conclusions derived are practically the same to those obtained with the Nadaraya--Watson-type estimator. Figure~\ref{fig:boxplots} shows the same information as that presented in Figure \ref{figure:CAPE}, but using the \textit{k}NN-type estimator.}
		
		{As a final comment, it should be highlighted the potential use of the proposed method in order to explain climate change effects on agriculture. The IPCC 2022 \citep[][Chapter 5]{portner2022climate} indicates that temperature trend changes have modified the life cycle of crops (both shortening or prolonging them, depending on latitude). In some areas, this effects have lead farmers to change their agricultural practices. In addition, the IPCC 2022 \citep{portner2022climate} also notices that temperature variability directly affects both the characteristics of some harvest (e.g. acidity or colour of fruits) and the risk of pests and diseases. Understanding temperature variability along the year can be useful for adapting agricultural practices.}

		\begin{figure*}[htb]
			\centering
			\includegraphics[width=1\textwidth]{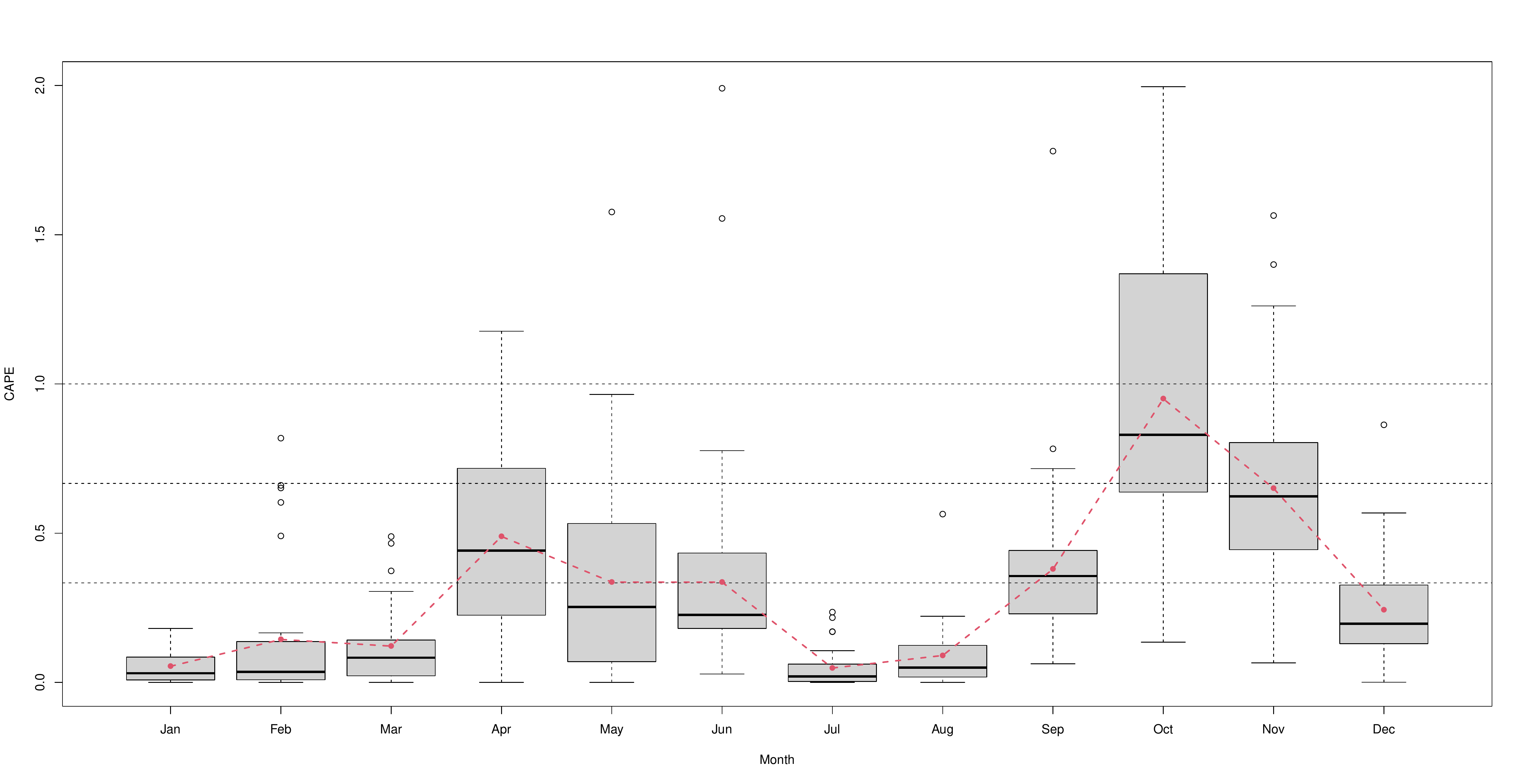}
			\caption{Boxplots for the circular prediction errors, by months, for 2019. Red dots: circular average prediction error for each month, using \textit{k}NN-type estimator.}
			\label{fig:boxplots}
		\end{figure*}

		\section{Discussion}\label{sec:dis}
		{Regression analysis of models involving}   {non-Euclidean data}  {represents a cha\-llen\-ging problem given the need of developing new statistical approaches for its study. In this paper, we focus on one}   {of these}  {non-standard settings. In par\-ti\-cu\-lar, we consider a regression model with a circular response and a functional covariate. A nonparametric estimator of Nadaraya--Watson type of the co\-rres\-pon\-ding regression function is proposed and analyzed from a theoretical and a practical point of view. Asymptotic bias and variance expressions, as well as asymptotic normality of the nonparametric approach are derived. The finite sample performance of the estimator is assessed by simulations and illustrated using a real data set.}   {From a practical perspective, our methodology allows to model the relation between a temperature curve (as a covariate) and a day (as a response), enabling to illustrate how temperature patterns change on a local scale.}
		
		{As in any kernel-based method {an} important tuning parameter, called the bandwidth or smoothing parameter, has to be selected by the user. In this paper, a cross-validation approach was proposed for this task and used in the numerical studies. Other possible selectors of plug-in type or based on bootstrap resampling methods could also be designed. While the use of global plug-in bandwidths in the infinite-dimensional context can be very intricate, given the difficulty of tackling with integrated versions of the AMSE, a bootstrap bandwidth selection method could be easily formulated. The idea would consist in approximating the CASE, given in \eqref{CASE} (or the mean of the CASE), by a bootstrap version of this error criterion using bootstrap samples obtained from bootstrap residuals. For this, two pilot bandwidths are needed, one to obtain the residuals and another to compute the bootstrap samples. This second pilot bandwidth can also be used to compute the Nadaraya--Watson{-type} estimator of $m$ employed instead of the theoretical regression function in the bootstrap expression of the CASE \eqref{CASE}. 
			Repeating this process many times, the bootstrap  bandwidth would be obtained by minimizing the averaged bootstrap CASE.
			Theoretical justification for this functional-circular bootstrapping procedure as well as the design of practical rules to obtain both pilot bandwidths are open problems out of the scope of the present paper, but of interest for a future research.}
		
		{In this paper, we have focused on a Nadaraya--Watson-type estimator for $m$ in model~(\ref{model}). However, a local linear-type estimator could be also defined. In that case,  local linear regression estimators for  models with a Euclidean response and a functional covariate are required. Moreover, using the 
			asymptotic theory on these estimators \citep{baillo2009local, ferraty2022scalar} and with similar arguments to those employed in the present paper for \eqref{C_est}, asymptotic results for the local linear-type regression estimator could also be derived.}
		
		{A possible extension of the model assumed in this paper would consist in including additional type of covariates. In a practical situation, this kind of complex models could help to obtain better predictions.  The most natural way to address this problem (following our kernel methodology) would be to use product kernels defined on possibly different spaces for the estimators of the sine and cosine components of the response.
			For instance, the ideas in \citet{garcia2013kernel} for cylindrical  density estimation, and the ones in \citet{racine2004nonparametric} or in  \citet{li_racine} for categorical data can be employed to define the corresponding weights in this context.   These type of product kernel estimators have been recently studied in nonparametric functional regression in \cite{Shang14} and in \cite{Selk2022}, for models with scalar response and different type of covariates (functional, real-valued or discrete-valued), and they could be directly employed in our circular-functional framework as regression estimators of the sine and cosine models. The important bandwidth selection problem in this context could be accomplished by a  cross-validation procedure or using a Bayesian approach \citep{Shang14}.
			We think that a deep study of these extended models is an interesting point that
			could be addressed as a further research.}
		
		{It should be noticed that {the temperature curves considered in our data illustration} exhibit temporal dependence. The nonparametric approach for estimating the regression function is not affected by data dependencies, but such an issue should be accounted for when obtaining {the asymptotic properties and a more appropriate bandwidth parameter. {C}onsidering that the proposed estimator is computed as the atan2 of nonparametric regression estimators for sine and cosine models, results about nonparametric regression for scalar responses and functional covariate with temporal dependence could be employed {to address this issue} \citep{attouch2010asymptotic,attouch2013robust,ferraty2006nonparametric,masry2005nonparametric,ling2019uniform,kurisu2022nonparametric}. Note also that our {real data application} considers a single location, but if {the} data collection presents also a high spatial frequency, it would be possible to proceed with a spatial analysis. {For this, an extension to the infinite-dimensional context of the regression model assumed in \cite{meilan2020a} could be considered, using the results obtained in  \cite{Ruiz2012} or in \cite{Aguilera2017} for spatial functional data as part of the analysis.}}}

		{The numerical analysis carried out in this research was performed with the statistical environment R \citep{RSoft}, using the
			functions supplied with the \texttt{fda.usc} package \citep{fda.usc}. }

\section{Acknowledgments}
Research of A. Meil\'an-Vila and M. Francisco-Fern\'andez has been supported by  {MINECO (Grant MTM2017-82724-R), MICINN (Grant} PID2020-113578RB-I00), and by Xunta de Galicia (Grupos de Referencia Competitiva ED431C-2020-14 and Centro de Investigaci\'on del Sistema Universitario de Galicia ED431G 2019/01), all of them through the ERDF. 
{Research of R. M. Crujeiras has been supported by MICINN (Grant PID2020-116587GB-I00), and by Xunta de Galicia (Grupos de Referencia Competitiva ED431C-2021-24), all of them through the ERDF.} The authors acknowledge the support of Meteogalicia for providing the data used in this research. {The authors thank Prof. Manuel Oviedo, from Universidade da Coru\~na, for his help in the use of the \texttt{fda.usc} package of \texttt{R}. The authors also
	thank  two anonymous referees for numerous
	useful comments that significantly improved this article.}

\begin{appendix}
	
	\section{Proofs of the theoretical results}\label{proofs}
	
	\begin{proof}[Proof of Lemma \ref{C_pro1}]
		For a fixed $\chi$, 	the asymptotic bias and variance of $\hat{m}_{j, h}(\chi)$, for $j=1,2$, can be directly obtained using the asymptotic properties of the Nadaraya--Watson estimator for models with a  {Euclidean} response and a functional covariate \citep{ferraty2007nonparametric}.
		
		Regarding the covariance between  {estimators} $\hat{m}_{1, h}(\chi)$ and $\hat{m}_{2, h}(\chi)$,  {a} decomposition can be obtained following similar  arguments to those used by \cite{friedlander1980} \citep[see also][]{liu1999stat} in the finite-dimensional case. In that setting,  analytic arguments
		about the Taylor expansion of the function $1/z$ around zero were employed and, therefore,  the result can be extended to the functional
		framework. Denoting
		\begin{eqnarray*}
			R_1&=&\frac{1}{nF_{\chi}(h)}\sum_{i=1}^n K(h^{-1}\norm{\mathcal{X}_i-\chi})\sin(\Theta_i),\\
			R_2&=&\frac{1}{nF_{\chi}(h)}\sum_{i=1}^n K(h^{-1}\norm{\mathcal{X}_i-\chi})\cos(\Theta_i)\quad\mbox{and}\\
			S&=&\frac{1}{nF_{\chi}(h)}\sum_{i=1}^n K(h^{-1}\norm{\mathcal{X}_i-\chi}),
		\end{eqnarray*}
		it follows that
		\begin{eqnarray}
			\lefteqn{\mathbb{C}{\rm ov}[\hat{m}_{1, h}(\chi),\hat{m}_{2, h}(\chi)]}\nonumber\\
			&=&	\mathbb{C}{\rm ov}\Bigg[\dfrac{\frac{1}{nF_{\chi}(h)}\sum_{i=1}^n K(h^{-1}\norm{\mathcal{X}_i-\chi})\sin(\Theta_i)}{\frac{1}{nF_{\chi}(h)}\sum_{i=1}^n K(h^{-1}\norm{\mathcal{X}_i-\chi})}\nonumber\\
			&,&\dfrac{\frac{1}{nF_{\chi}(h)}\sum_{i=1}^n K(h^{-1}\norm{\mathcal{X}_i-\chi})\cos(\Theta_i)}{\frac{1}{nF_{\chi}(h)}\sum_{i=1}^n K(h^{-1}\norm{\mathcal{X}_i-\chi})}\Bigg]\nonumber\\
			&=&\dfrac{\mu_{R_1}}{\mu_{S}}\dfrac{\mu_{R_2}}{\mu_{S}}\left[\dfrac{\sigma^2_{S}}{\mu^2_{S}}+\dfrac{\mathbb{C}{\rm ov}(R_1,R_2)}{\sigma_{R_1}\sigma_{R_2}}\dfrac{\sigma_{R_1}}{\mu_{R_1}}\dfrac{\sigma_{R_2}}{\mu_{R_2}}\right.\nonumber\\
			&-&\left.\dfrac{\mathbb{C}{\rm ov}(R_2,S)}{\sigma_{R_2}\sigma_{S}}\dfrac{\sigma_{R_2}}{\mu_{R_2}}\dfrac{\sigma_{S}}{\mu_{S}}-\dfrac{\mathbb{C}{\rm ov}(R_1,S)}{\sigma_{R_1}\sigma_{S}}\dfrac{\sigma_{R_1}}{\mu_{R_1}}\dfrac{\sigma_{S}}{\mu_{S}}\right]+\mathpzc{o}\left[\frac{1}{nF_{\chi}(h)}\right]\nonumber\\
			&=&\dfrac{\mu_{R_1}}{\mu_{S}}\dfrac{\mu_{R_2}}{\mu_{S}}\left[\dfrac{\sigma^2_{S}}{\mu^2_{S}}+\dfrac{\mathbb{C}{\rm ov}(R_1,R_2)}{\mu_{R_1}\mu_{R_2}}-\dfrac{\mathbb{C}{\rm ov}(R_2,S)}{\mu_{R_2}\mu_{S}}-\dfrac{\mathbb{C}{\rm ov}(R_1,S)}{\mu_{R_1}\mu_{S}}\right]\nonumber\\
			&+&\mathpzc{o}\left[\frac{1}{nF_{\chi}(h)}\right] \label{cov_m1_m2},
		\end{eqnarray}
		where $\mu_{R_1}$ and $\sigma^2_{R_1}$, $\mu_{R_2}$ and $\sigma^2_{R_2}$, and $\mu_{S}$ and $\sigma^2_{S}$ denote the expectation and variance of $R_1$, $R_2$ and $S$, respectively. Now,  using Lemma 4 and 5 of \citep{ferraty2007nonparametric}, it can be obtained that
		\begin{eqnarray}
			\mu_{R_1}&\to&m_1(\chi)M_{\chi,1},\label{mur1}	\\
			\mu_{R_2}&\to&m_2(\chi)M_{\chi,1},\label{mur2}	\\
			\mu_{S}&\to&M_{\chi,1}\label{mus},	
		\end{eqnarray}
		and
		\begin{eqnarray}
			\sigma^2_S&=&\dfrac{M_{\chi,2}}{nF_{\chi}(h)}[1+\mathpzc{o}(1)],\label{sigmas}\\
			\mathbb{C}{\rm ov}(R_1,S)&=&m_1(\chi)\dfrac{M_{\chi,2}}{nF_{\chi}(h)}[1+\mathpzc{o}(1)],\label{covr1s}\\
			\mathbb{C}{\rm ov}(R_2,S)&=&m_2(\chi)\dfrac{M_{\chi,2}}{nF_{\chi}(h)}[1+\mathpzc{o}(1)]\label{covr2s}.
		\end{eqnarray}
		
		Moreover,  taking into account the continuity of $m_1$, $m_2$ and $c$, and using the results by \citet[][ {pp. 283}]{ferraty2007nonparametric}, it follows that
		\begin{eqnarray}
			\lefteqn{\mathbb{C}{\rm ov}(R_1,R_2)}\nonumber\\
			&=&\dfrac{1}{nF^2_{\chi}(h)}\bigg\{\mathbb{E}\bigg[\sin\Theta\cos\Theta K^2(h^{-1}\norm{\mathcal{X}-\chi}) \bigg]\nonumber\\
			&-&\mathbb{E}\bigg[\sin\Theta K(h^{-1}\norm{\mathcal{X}-\chi}) \bigg]\mathbb{E}\bigg[\cos\Theta K(h^{-1}\norm{\mathcal{X}-\chi}) \bigg]\bigg\}\nonumber\\
			&=&\dfrac{1}{nF^2_{\chi}(h)}\bigg\{[c(\chi)+m_1(\chi)m_2(\chi)+\mathpzc{o}(1)]\mathbb{E}\bigg[ K^2(h^{-1}\norm{\mathcal{X}-\chi}) \bigg]\nonumber\\
			&-&[m_1(\chi)+\mathpzc{o}(1)]\mathbb{E}\bigg[ K(h^{-1}\norm{\mathcal{X}-\chi}) \bigg][m_2(\chi)+\mathpzc{o}(1)]\mathbb{E}\bigg[ K(h^{-1}\norm{\mathcal{X}-\chi}) \bigg]\bigg\}\nonumber\\
			&=&[c(\chi)+m_1(\chi)m_2(\chi)]\dfrac{M_{\chi,2}}{nF_{\chi}(h)}[1+\mathpzc{o}(1)],\label{covr1r2}
		\end{eqnarray}
		since	
		\begin{eqnarray*}
			\{\mathbb{E}[ K(h^{-1}\norm{\mathcal{X}-\chi})]\}^2 &=&\mathcal{O}[F^2_{\chi}(h)],\\
			\mathbb{E}[ K^2(h^{-1}\norm{\mathcal{X}-\chi})] &=&F_{\chi}(h)\bigg[K^2(1)-\int_0^1(K^2)'(s)\tau_{\chi,h}(s)ds\bigg]\\&=&F_{\chi}(h)M_{\chi,2}.
		\end{eqnarray*} 
		
		Therefore,  {considering \eqref{cov_m1_m2} and using \eqref{mur1}--\eqref{covr1r2}}, it can be directly obtained that
		\begin{eqnarray*}
			\mathbb{C}{\rm ov}[\hat{m}_{1, h}(\chi),\hat{m}_{2, h}(\chi)]\nonumber&=&\dfrac{c(\chi)}{nF_{\chi}(h)}\dfrac{M_{\chi,2}}{M^2_{\chi,1}}+\mathpzc{o}\bigg[\frac{1}{nF_{\chi}(h)}\bigg].
		\end{eqnarray*}
	\end{proof}

	\begin{proof}[Proof of Theorem \ref{C_teoNadaraya--Watson}]
		First, to obtain the bias of $\hat{m}_{h}(\chi)$, given in (\ref{C_est}), the function ${\rm atan2}(\hat{m}_{1,h},\hat{m}_{2,h})$ is expanded in Taylor series around $(m_1,m_2)$,  {where for simplicity, $\hat{m}_{j,h}$ and $m_j$ denote $\hat{m}_{1,h}(\chi)$ and $m(\chi)$, respectively, for $j=1,2,$} to get
		\begin{eqnarray}\label{atan}
			\lefteqn{{\rm atan2}(\hat{m}_{1, h},\hat{m}_{2, h})}\nonumber\\
			&=&{\rm atan2}(m_1,m_2)+\dfrac{m_2}{m_1^2+m_2^2}(\hat{m}_{1, h}-m_1)\nonumber\\
			&-&\dfrac{m_1}{m_1^2+m_2^2}(\hat{m}_{2, h}-m_2)+\dfrac{m_1m_2}{(m_1^2+m_2^2)^2}(\hat{m}_{2, h}-m_2)^2\nonumber\\
			&-&\dfrac{m_1m_2}{(m_1^2+m_2^2)^2}(\hat{m}_{1, h}-m_1)^2-\dfrac{m_1^2-m_2^2}{(m_1^2+m_2^2)^2}(\hat{m}_{1, h}-m_1)(\hat{m}_{2, h}-m_2)\nonumber\\
			&+&\mathcal{O}\left[(\hat{m}_{1, h}-m_1)^3\right]+\mathcal{O}\left[(\hat{m}_{2, h}-m_2)^3\right].
		\end{eqnarray}
		
		Hence, noting that $\ell(\chi)=[m^2_1(\chi)+m_2^2(\chi)]^{1/2}$ and taking expectations in the above expression,  it follows that
		\begin{eqnarray*}
			\lefteqn{{\mathbb{E}}[\hat{m}_{h}(\chi)]-m(\chi)}\\
			&=&\dfrac{m_2(\chi)}{\ell^2(\chi)}{\mathbb{E}}[\hat{m}_{1,h}(\chi)-{m}_{1}(\chi)]- \dfrac{m_1(\chi)}{\ell^2(\chi)}{\mathbb{E}}[\hat{m}_{2,h}(\chi)-{m}_{2}(\chi)]\\
			&+&\dfrac{m_1(\chi)m_2(\chi)}{\ell^4(\chi)}{\mathbb{E}}\{[\hat{m}_{2,h}(\chi)-{m}_{2}(\chi)]^2\}\\
			&-&\dfrac{m_1(\chi)m_2(\chi)}{\ell^4(\chi)}{\mathbb{E}}\{[\hat{m}_{1,h}(\chi)-{m}_{1}(\chi)]^2\}\\
			&-&\dfrac{m_1^2(\chi)-m_2^2(\chi)}{\ell^4(\chi)}{\mathbb{E}}\{[\hat{m}_{1,h}(\chi)-{m}_{1}(\chi)][\hat{m}_{2,h}(\chi)-{m}_{2}(\chi)]\}\\
			&+&\mathcal{O}\{[\hat{m}_{1,h}(\chi)-{m}_{1}(\chi)]^3\}+\mathcal{O}\{[\hat{m}_{2,h}(\chi)-{m}_{2}(\chi)]^3\}.
		\end{eqnarray*}
		
		Noting that ${\mathbb{E}}\left[(\hat{m}_{j,h} -m_j)^2\right]={\mathbb{V}{\rm ar}}(\hat{m}_{j, h})+[\mathbb{E}(\hat{m}_{j, h}-m_j)]^2$, and using the results in Lemma \ref{C_pro1}, it is obtained that 
		\begin{eqnarray}\label{Emm}		
			\lefteqn{{\mathbb{E}}[\hat{m}_{h}(\chi)-m(\chi)]}\nonumber\\
			&=&\dfrac{m_2(\chi)}{\ell^2(\chi)}\varphi_{1,\chi}'(0)\dfrac{M_{\chi,0}}{M_{\chi,1}}h-\dfrac{m_1(\chi)}{\ell^2(\chi)}\varphi_{2,\chi}'(0)\dfrac{M_{\chi,0}}{M_{\chi,1}}h+\mathcal{O}\bigg[\frac{1}{nF_{\chi}(h)}\bigg]+\mathpzc{o}(h)\nonumber\\
			&=&\dfrac{1}{\ell^2(\chi)}\dfrac{M_{\chi,0}}{M_{\chi,1}}h{[m_2(\chi)\varphi_{1,\chi}'(0)-m_1(\chi)\varphi_{2,\chi}'(0)]}+\mathcal{O}\bigg[\frac{1}{nF_{\chi}(h)}\bigg]+\mathpzc{o}(h).
		\end{eqnarray}
		
		Now,  {expression \eqref{Emm} can be further simplify, expanding} the function ${\rm atan2}$ at $[m_1(\mathcal{X}),m_2(\mathcal{X})]$  in Taylor series around $[m_1(\chi),m_2(\chi)]${, as in (\ref{atan})}. Taking conditional expectations in that expansion, one gets that
		\begin{eqnarray}
			\lefteqn{{\mathbb{E}}[{m}(\mathcal{X})-m(\chi)\mid \norm{\mathcal{X}-\chi}=s]}\nonumber\\
			&=&\dfrac{m_2(\chi)}{\ell^2(\chi)}{\mathbb{E}}[m_1(\mathcal{X})-{m}_{1}(\chi)\mid \norm{\mathcal{X}-\chi}=s]\nonumber\\
			&-&\dfrac{m_1(\chi)}{\ell^2(\chi)}{\mathbb{E}}[m_2(\mathcal{X})-{m}_{2}(\chi)\mid \norm{\mathcal{X}-\chi}=s]\nonumber\\
			&+&\dfrac{m_1(\chi)m_2(\chi)}{\ell^4(\chi)}{\mathbb{E}}\{[m_2(\mathcal{X})-{m}_{2}(\chi)]^2\mid \norm{\mathcal{X}-\chi}=s\}\nonumber\\
			&-&\dfrac{m_1(\chi)m_2(\chi)}{\ell^4(\chi)}{\mathbb{E}}\{[m_1(\mathcal{X})-{m}_{1}(\chi)]^2\mid \norm{\mathcal{X}-\chi}=s\}\nonumber\\
			&-&\dfrac{m_1^2(\chi)-m_2^2(\chi)}{\ell^4(\chi)}{\mathbb{E}}\{[m_1(\mathcal{X})-{m}_{1}(\chi)][m_2(\mathcal{X})-{m}_{2}(\chi)]\mid \norm{\mathcal{X}-\chi}=s\}\nonumber\\
			&+&\mathcal{O}\{[m_1(\mathcal{X})-{m}_{1}(\chi)]^3\}+\mathcal{O}\{[m_2(\mathcal{X})-{m}_{2}(\chi)]^3\}.
			\label{EmXmx}
		\end{eqnarray}
		
		Noting that $\varphi_{\chi}(s)=\mathbb{E}[m(\mathcal{X})-m(\chi)\mid \norm{\mathcal{X}-\chi}=s]$    and
		$\varphi_{j,\chi}(s)=\mathbb{E}[m_j(\mathcal{X})-m_j(\chi)\mid \norm{\mathcal{X}-\chi}=s],$ for $j=1,2$,  $\forall s\in\mathbb{R}$, deriving  expression (\ref{EmXmx})  with respect to $s$ and  {replacing the obtained result in} (\ref{Emm}), it follows that
		
		\begin{eqnarray*}		{\mathbb{E}}[\hat{m}_{h}(\chi)-m(\chi)]&=&\varphi_{\chi}'(0)\dfrac{M_{\chi,0}}{M_{\chi,1}}h+\mathcal{O}\bigg[\frac{1}{nF_{\chi}(h)}\bigg]+\mathpzc{o}(h).\end{eqnarray*}

		In order to derive the variance of the estimator $\hat{m}_{h}(\chi)$, the function ${\rm atan2}^2(\hat{m}_{1, h},\hat{m}_{2, h})$ is  expanded  in Taylor series around $(m_1,m_2)$ to obtain that	
		\begin{eqnarray}\label{atan2}
			\lefteqn{{\rm atan2}^2(\hat{m}_{1, h},\hat{m}_{2, h})}\nonumber\\
			&=&{\rm atan2}^2({m}_1,{m}_2)+\dfrac{2{\rm atan2}(m_1,m_2)m_2}{m_1^2+m_2^2}(\hat{m}_{1, h}-{m}_1)\nonumber\\
			&-&\dfrac{2{\rm atan2}(m_1,m_2)m_1}{m_1^2+m_2^2}(\hat{m}_{2, h}-{m}_2)\nonumber\\
			&+&\dfrac{2{\rm atan2}(m_1,m_2)m_1m_2}{(m_1^2+m_2^2)^2}(\hat{m}_{2, h}-{m}_2)^2\nonumber\\
			&-&\dfrac{2{\rm atan2}(m_1,m_2)m_1m_2}{(m_1^2+m_2^2)^2}(\hat{m}_{1, h}-{m}_1)^2\nonumber\\
			&-&\dfrac{2{\rm atan}(m_1,m_2)(m_1^2-m_2^2)}{(m_1^2+m_2^2)^2}(\hat{m}_{1, h}-{m}_1)(\hat{m}_{2, h}-{m}_2)\nonumber\\
			&+&\dfrac{m_1^2}{(m_1^2+m_2^2)^2}(\hat{m}_{2, h}-{m}_2)^2+\dfrac{m_2^2}{(m_1^2+m_2^2)^2}(\hat{m}_{1, h}-{m}_1)^2\nonumber\\
			&-&\dfrac{2m_1m_2}{(m_1^2+m_2^2)^2}(\hat{m}_{1, h}-{m}_1)(\hat{m}_{2, h}-{m}_2)\nonumber\\
			&+&\mathcal{O}\left[(\hat{m}_{1, h}-{m}_1)^3\right]+\mathcal{O}\left[(\hat{m}_{2, h}-{m}_2)^3\right].
		\end{eqnarray}
		
		Taking expectations in the Taylor expansions given in  (\ref{atan}) and (\ref{atan2}), one gets that
		\begin{eqnarray*}
			\lefteqn{{\mathbb{V}{\rm  ar}}[\hat{m}_{h}(\chi)]}\\
			&=&{\rm atan2}^2[{m}_{1,h}(\chi),{m}_{2,h}(\chi)]\\
			&+&\dfrac{2{\rm atan2}[m_1(\chi),m_2(\chi)]m_2(\chi)}{\ell^2(\chi)}{\mathbb{E}}[\hat{m}_{1,h}(\chi)-{m}_{1}(\chi)]\\
			&-&\dfrac{2{\rm atan2}[m_1(\chi),m_2(\chi)]m_1(\chi)}{\ell^2(\chi)}{\mathbb{E}}[\hat{m}_{2,h}(\chi)-{m}_{2}(\chi)]\\
			&+&\dfrac{2{\rm atan2}[m_1(\chi),m_2(\chi)]m_1(\chi)m_2(\chi)}{\ell^4(\chi)}{\mathbb{E}}\{[\hat{m}_{2,h}(\chi)-{m}_{2}(\chi)]^2\}\\
			&-&\dfrac{2{\rm atan2}[m_1(\chi),m_2(\chi)]m_1(\chi)m_2(\chi)}{\ell^4(\chi)}{\mathbb{E}}\{[\hat{m}_{1,h}(\chi)-{m}_{1}(\chi)]^2\}\\
			&-&\dfrac{2{\rm atan}[m_1(\chi),m_2(\chi)][m_1^2(\chi)-m_2^2(\chi)]}{\ell^4(\chi)}{\mathbb{E}}\{[\hat{m}_{1,h}(\chi)-{m}_{1}(\chi)][\hat{m}_{2,h}(\chi)-{m}_{2}(\chi)]\}\\
			&+&\dfrac{m_1^2(\chi)}{\ell^4(\chi)}{\mathbb{E}}\{[\hat{m}_{2,h}(\chi)-{m}_{2}(\chi)]^2\}+\dfrac{m_2^2(\chi)}{\ell^4(\chi)}{\mathbb{E}}\{[\hat{m}_{1,h}(\chi)-{m}_{1}(\chi)]^2\}\\
			&-&\dfrac{2m_1(\chi)m_2(\chi)}{\ell^4(\chi)}{\mathbb{E}}\{[\hat{m}_{1,h}(\chi)-{m}_{1}(\chi)][\hat{m}_{2,h}(\chi)-{m}_{2}(\chi)]\}\\
			&-&\bigg({\rm atan2}[m_1(\chi),m_2(\chi)]+\dfrac{m_2(\chi)}{\ell^2(\chi)}{\mathbb{E}}[\hat{m}_{1,h}(\chi)-{m}_{1}(\chi)]\\
			&-&\dfrac{m_1(\chi)}{\ell^2(\chi)}{\mathbb{E}}[\hat{m}_{2,h}(\chi)-{m}_{2}(\chi)]+\dfrac{m_1(\chi)m_2(\chi)}{\ell^4(\chi)}{\mathbb{E}}\{[\hat{m}_{2,h}(\chi)-{m}_{2}(\chi)]^2\}\\
			&-&\dfrac{m_1(\chi)m_2(\chi)}{\ell^4(\chi)}{\mathbb{E}}\{[\hat{m}_{1,h}(\chi)-{m}_{1}(\chi)]^2\}\nonumber\\
			&&- \dfrac{m_1^2(\chi)-m_2^2(\chi)}{\ell^4(\chi)}{\mathbb{E}}\{[\hat{m}_{1,h}(\chi)-{m}_{1}(\chi)][\hat{m}_{2,h}(\chi)-{m}_{2}(\chi)]\}\\
			&+&\mathcal{O}\{[\hat{m}_{1,h}(\chi)-{m}_{1}(\chi)]^3\}+\mathcal{O}\{[\hat{m}_{2,h}(\chi)-{m}_{2}(\chi)]^3\}\bigg)^2\\
			&+&\mathcal{O}\{[\hat{m}_{1,h}(\chi)-{m}_{1}(\chi)]^3\}+\mathcal{O}\{[\hat{m}_{2,h}(\chi)-{m}_{2}(\chi)]^3\}.\end{eqnarray*}
		
		By straightforward calculations, it can be obtained that
		\begin{eqnarray*}
			\lefteqn{\mathbb{V}{\rm ar}[\hat{m}_{h}(\chi)]}\\
			&=&\dfrac{m_1^2(\chi)}{\ell^4(\chi)}{\mathbb{E}}\{[\hat{m}_{2,h}(\chi)-{m}_{2}(\chi)]^2\}+\dfrac{m_2^2(\chi)}{\ell^4(\chi)}{\mathbb{E}}\{[\hat{m}_{1,h}(\chi)-{m}_{1}(\chi)]^2\}\\
			&-&\dfrac{2m_1(\chi)m_2(\chi)}{\ell^4(\chi)}{\mathbb{E}}\{[\hat{m}_{1,h}(\chi)-{m}_{1}(\chi)][\hat{m}_{2,h}(\chi)-{m}_{2}(\chi)]\}\\
			&-&\dfrac{m^2_2(\chi)}{\ell^4(\chi)}\{{\mathbb{E}}[\hat{m}_{1,h}(\chi)-{m}_{1}(\chi)]\}^2 - \dfrac{m^2_1(\chi)}{\ell^4(\chi)}\{{\mathbb{E}}[\hat{m}_{2,h}(\chi)-{m}_{2}(\chi)]\}^2\\
			&+&\dfrac{2m_1(\chi)m_2(\chi)}{\ell^4(\chi)}{\mathbb{E}}[\hat{m}_{1,h}(\chi)-{m}_{1}(\chi)]{\mathbb{E}}[\hat{m}_{2,h}(\chi)-{m}_{2}(\chi)]\\
			&+&\mathcal{O}\{[\hat{m}_{1,h}(\chi)-{m}_{1}(\chi)]^3\}+\mathcal{O}\{[\hat{m}_{2,h}(\chi)-{m}_{2}(\chi)]^3\}.
		\end{eqnarray*}

		So,  noting that ${\mathbb{E}}\left[(\hat{m}_{j,h} -m_j)^2\right]={\mathbb{V}{\rm ar}}(\hat{m}_{j, h})+[\mathbb{E}(\hat{m}_{j, h}-m_j)]^2$, it can be obtained that the conditional variance is:
		\begin{eqnarray*}
			\lefteqn{\mathbb{V}{\rm ar}[\hat{m}_{h}(\chi)]}\\
			&=&\dfrac{m_1^2(\chi)}{\ell^4(\chi)}{\mathbb{V}{\rm ar}}[\hat{m}_{2, h}(\chi)]+ \dfrac{m_2^2(\chi)}{\ell^4(\chi)}{\mathbb{V}{\rm ar}}[\hat{m}_{1, h}(\chi)]\\
			&-&\dfrac{2m_1(\chi)m_2(\chi)}{\ell^4(\chi)}\mathbb{C}{\rm ov}[\hat{m}_{1, h}(\chi),\hat{m}_{2, h}(\chi)]\\
			&+&\mathcal{O}\left\{[\hat{m}_{1, h}(\chi)-{m}_1(\chi)]^3\right\}+\mathcal{O}\left\{[\hat{m}_{2, h}(\chi)-{m}_2(\chi)]^3\right\}.
		\end{eqnarray*}

		Therefore, using Lemma \ref{C_pro1}, one gets that
		\begin{eqnarray*}{\mathbb{V}{\rm ar}}[\hat{m}_{h}(\chi)]&=&\dfrac{m_1^2(\chi)}{\ell^4(\chi)}\dfrac{s_2^2(\chi)}{nF_{\chi}(h)}\dfrac{M_{\chi,2}}{M^2_{\chi,1}}+ \dfrac{m_2^2(\chi)}{\ell^4(\chi)}\dfrac{s_1^2(\chi)}{nF_{\chi}(h)}\dfrac{M_{\chi,2}}{M^2_{\chi,1}}\\
			&-&\dfrac{2m_1(\chi)m_2(\chi)}{\ell^4(\chi)}\dfrac{c(\chi)}{nF_{\chi}(h)}\dfrac{M_{\chi,2}}{M^2_{\chi,1}}+\mathpzc{o}\bigg[\frac{1}{nF_{\chi}(h)}\bigg].\end{eqnarray*}

		Considering  {equations (\ref{m1m2})--(\ref{c})}, and taking into account that  $f_1^2(\chi)+f_2^2(\chi)=1$, it follows that
		\begin{eqnarray*}
			\label{l2sigma12}
			\lefteqn{m_1^2(\chi)s_2^2(\chi)+m_2^2(\chi)s_1^2(\chi)-2m_1(\chi)m_2(\chi)c(\chi)}\nonumber\\
			&=&f_1^2(\chi)f_2^2(\chi)\ell^2(\chi)\sigma^2_2(\chi)-2f_2(\chi)f_1^3(\chi)\ell^2(\chi)\sigma_{12}(\chi)\\
			&+&f_1^4(\chi)\ell^2(\chi)\sigma^2_1(\chi)+f_2^2(\chi)f_1^2(\chi)\ell^2(\chi)\sigma^2_2(\chi)\\
			&+&2f_2^3(\chi)f_1(\chi)\ell^2(\chi)\sigma_{12}(\chi)+f_2^4(\chi)\ell^2(\chi)\sigma^2_1(\chi)\\
			&-&2f_1^2(\chi)f_2^2(\chi)\ell^2(\chi)\sigma^2_2(\chi)+2f_1^3(\chi)f_2(\chi)\ell^2(\chi)\sigma_{12}(\chi)\\
			&-&2f_2^3(\chi)f_1(\chi)\ell^2(\chi)\sigma_{12}(\chi)+2f^2_1(\chi)f^2_2(\chi)\ell^2(\chi)\sigma^2_1(\chi)\\
			&=&\ell^2(\chi)\sigma^2_1(\chi).
		\end{eqnarray*}
		
		Therefore,	
		\begin{eqnarray*}{\mathbb{V}{\rm ar}}[\hat{m}_{h}(\chi)]&=&\dfrac{1}{nF_{\chi}(h)} \dfrac{\sigma^2_1(\chi)}{\ell^2(\chi)}\dfrac{M_{\chi,2}}{M^2_{\chi,1}}+\mathpzc{o}\bigg[\frac{1}{nF_{\chi}(h)}\bigg].\label{var_m_p0}\end{eqnarray*}
	\end{proof}

	\begin{proof}[Proof of Theorem \ref{eq:theorem2}]
		In order to derive the asymptotic distribution of $\hat{m}_h$, given in (\ref{C_est}), we  compute the asymptotic distribution of $\hat{M}=\hat{m}_{1,h}/\hat{m}_{2,h}$  and  apply Theorem~A of \cite{serfling2009approximation}. First,  {using similar arguments to those in} Lemma 6 of \cite{ferraty2007nonparametric}, it can be obtained that
		$$\hat{M}(\chi)-\mathbb{E}[\hat{M}(\chi)]=\dfrac{\hat{m}_{1,h}(\chi)}{\hat{m}_{2,h}(\chi)}-\dfrac{\mathbb{E}[\hat{m}_{1,h}(\chi)]}{\mathbb{E}[\hat{m}_{2,h}(\chi)]}+\mathpzc{o}\bigg[\frac{1}{\sqrt{nF_{\chi}(h)}}\bigg].$$
		
		
		However, the following decomposition holds: 
		\begin{eqnarray*}
			\lefteqn{\dfrac{\hat{m}_{1,h}(\chi)}{\hat{m}_{2,h}(\chi)}-\dfrac{\mathbb{E}[\hat{m}_{1,h}(\chi)]}{\mathbb{E}[\hat{m}_{2,h}(\chi)]}}\\
			&=&\dfrac{\{\hat{m}_{1,h}(\chi)-\mathbb{E}[\hat{m}_{1,h}(\chi)]\}\mathbb{E}[\hat{m}_{2,h}(\chi)]}{\hat{m}_{2,h}(\chi)\mathbb{E}[\hat{m}_{2,h}(\chi)]}+\dfrac{\{\mathbb{E}[\hat{m}_{2,h}(\chi)]-\hat{m}_{2,h}(\chi)\}\mathbb{E}[\hat{m}_{1,h}(\chi)]}{\hat{m}_{2,h}(\chi)\mathbb{E}[\hat{m}_{2,h}(\chi)]}
		\end{eqnarray*}
		
		Therefore, $\hat{M}(\chi)-\mathbb{E}[\hat{M}(\chi)]$ has the same asymptotic distribution  {as} 
		\begin{eqnarray*}
			N(\chi)&=&\dfrac{\{\hat{m}_{1,h}(\chi)-\mathbb{E}[\hat{m}_{1,h}(\chi)]\}\mathbb{E}[\hat{m}_{2,h}(\chi)]}{\hat{m}_{2,h}(\chi)\mathbb{E}[\hat{m}_{2,h}(\chi)]}+\dfrac{\{\mathbb{E}[\hat{m}_{2,h}(\chi)]-\hat{m}_{2,h}(\chi)\}\mathbb{E}[\hat{m}_{1,h}(\chi)]}{\hat{m}_{2,h}(\chi)\mathbb{E}[\hat{m}_{2,h}(\chi)]}.
		\end{eqnarray*}
		
		Note that $$[\hat{m}_{1,h}(\chi)/\hat{m}_{2,h}(\chi)]-\{\mathbb{E}[\hat{m}_{1,h}(\chi)]/\mathbb{E}[\hat{m}_{2,h}(\chi)]\}$$ can be expressed as an array of independent centered random variables and, consequently, the Central Limit Theorem can be applied. Moreover, using the Theorem of Slutsky, it can be obtained that $N(\chi)$ also follows a normal distribution.
		
		{The asymptotic bias and variance of $\hat{M}$ could be computed by expanding the function $\hat{m}_{1,h}/\hat{m}_{2,h}$  in Taylor series around $m_1/m_2$ and using similar steps to those}
		employed to derive $	{\mathbb{E}}[\hat{m}_{h}(\chi)]$ and $\mathbb{V}{\rm ar}[\hat{m}_{h}(\chi)]$.  {Therefore, denoting by $\bm{b}_h$ and  $v$ the leading terms of the asymptotic bias and variance of $\hat{M}$, respectively,} it follows that, as $n\to\infty,$
		$$\dfrac{\hat{M}(\chi)-m_1(\chi)/m_2(\chi)-\bm{b}_h}{v^{1/2}(\chi)}\xrightarrow[]{\mathcal{L}}N(0,1).$$
		Finally,  applying Theorem A of \cite{serfling2009approximation} and Theorem \ref{C_teoNadaraya--Watson}, it can be obtained that {, as $n\to\infty,$}
		$$\sqrt{n{F}_{\chi}(h)}\dfrac{\ell(\chi)M_{\chi,1}}{\sqrt{\sigma^2_1(\chi)M_{\chi,2}}}[\hat{m}_{h}(\chi)-m(\chi)-\bm{B}_h]\xrightarrow[]{\mathcal{L}}N(0,1).$$
	\end{proof}

	\begin{proof}[Proof of Corollary \ref{eq:cor1}] This result can be obtained by combining Theorem \ref{eq:theorem2} and assumption (C7).
	\end{proof}
	
	\begin{proof}[Proof of Corollary \ref{eq:cor2}] Using the Glivenko--Cantelli Theorem,
		it follows that
		$$\dfrac{\hat{F}_{\chi}(h)}{{F}_{\chi}(h)}\xrightarrow[]{\mathbb{P}}1.$$
		
		Therefore,  {using this result, assumption (C7), and the consistency of estimators {$\hat{\sigma}_1^2$} and $\hat{l}$, as well as} Theorem \ref{eq:theorem2}, Corollary \ref{eq:cor2} can be directly obtained. 
	\end{proof}
	
\end{appendix}

\bibliographystyle{apalike}
\bibliography{bibibnew2}

\end{document}